\newtheorem{Theorem}{Theorem}
\newtheorem{Corollary}{Corollary}
\newtheorem{Proposition}{Proposition}
\newtheorem{Lemma}{Lemma}
\newtheorem{Remark}{Remark}
\newcommand{\ds}{{\mathrm{d}}s}
\newcommand{\dy}{{\mathrm{d}}y}
\newcommand{\Id}{{\mathbf{1}}}
\newcommand{\dom}{{\mathrm{dom}~}}
\newcommand{\img}{{\mathrm{rng}~}}
\newcommand{\Ree}{{\mathrm{Re}~}}
\newcommand{\dis}{\displaystyle}
\begin{document}

\title{Influence of the bound states in the Neumann Laplacian in a thin waveguide}
\author{Carlos R. Mamani {\small and} Alessandra A. Verri \\ 
\vspace{-0.6cm}
\small
\em Departamento de Matem\'{a}tica -- UFSCar, \small \it S\~{a}o Carlos, SP, 13560-970 Brazil\\ \\}
\date{\today}

\maketitle

\begin{abstract} 
We study the Neumann Laplacian operator $-\Delta_{\Omega}^N$ restricted to a twisted waveguide $\Omega$.
The goal is to find the effective operator when the diameter of $\Omega$ tends to zero. 
However, when $\Omega$ is ``squeezed'' there are divergent eigenvalues
due to the transverse oscillations.
We show that each one of these eigenvalues influences the action of the effective operator in a different way. 
In the case where $\Omega$ is periodic and sufficiently thin, we find   
information about the absolutely continuous spectrum of $-\Delta_{\Omega}^N$
and the existence and location of band gaps in its structure.
\end{abstract}

\

\section{Introduction and main results}

The Laplacian operator
in a thin set with Neumann boundary
conditions has been studied in various situations \cite{prizzi,hale,michelle,ricardo}.
In particular, let $-\Delta_\Omega^N$ be the Neumann Laplacian operator restricted to a thin waveguide $\Omega$ 
in $\mathbb R^3$.
An interesting question is to study the behavior of $-\Delta_\Omega^N$ 
when the
diameter of $\Omega$ tends to zero and
to find the effective operator $T$ in this process.
Since $\Omega$ shrinks to a spatial curve, it is natural to associate $T$ with a one-dimensional operator.
In fact, it is known that $T$
is the one-dimensional Neumann Laplacian operator; in this case, its action is given by $w \mapsto -w''$. 
See, for example, \cite{ricardo}.
This result holds even if $\Omega$
is a twisted or a bent waveguide, i.e., the geometry of $\Omega$ does not influence 
the action of the effective operator.

In this work we  study $-\Delta_\Omega^N$ in the case where $\Omega$ is a twisted waveguide.
As a first goal we study its behavior as the diameter of $\Omega$ tends to zero.
In this process there are divergent eigenvalues
due to the transverse oscillations in $\Omega$.
Our strategy shows that each one of these eigenvalues influences the action of the effective operator
in a different way.
Namely, the twisted effect influences directly its action,
see (\ref{effectivepotential02}), (\ref{effectivepotential}), (\ref{effectivepotential01})
and (\ref{domintrorobin}) in this Introduction.
The second goal of this work is to consider the case where $\Omega$ is periodic in the sense that
the twisted effect varies periodically. In the case that $\Omega$ is small enough,
we find
information about the absolutely continuous spectrum of the Neumann Laplacian and the existence and location of band gaps in its structure.
In the next paragraphs, we explain the model and provide details of
our main results.

Let $I = \mathbb R$ or $I=(a,b)$ a bounded interval in $\mathbb R$.
Pick $S \neq \emptyset$ an open, bounded, smooth and connected subset of~$\mathbb R^2$;
denote by $y :=(y_1,y_2)$ an element of $S$.
Let $\alpha: \overline{I} \to \mathbb R$ be a $C^2$ function; we suppose
that $\alpha', \alpha'' \in L^\infty(I)$ and $\alpha(0) = 0$ if $ I=\mathbb R$, or $\alpha(a)=0$ if $I=(a,b)$. 
For each $\varepsilon > 0$ small enough, we define the thin twisted waveguide
\[\Omega_\varepsilon^\alpha
:=\{ \Gamma_\varepsilon^\alpha(s) {\bf x}^t, {\bf x} =(s,y) \in I \times S\},\]
where
\begin{equation}\label{gammarotation}
\Gamma_\varepsilon^\alpha(s) :=\left(\begin{array}{ccc}
1        &   0      &    0 \\
 0       &   \varepsilon   \cos \alpha(s)   &    -\varepsilon   \sin \alpha(s) \\
0         &    \varepsilon  \sin \alpha(s)   & \varepsilon   \cos \alpha(s) 
\end{array}\right).
\end{equation}

Let $-\Delta_{\Omega_\varepsilon^\alpha}^N$ be the Neumann Laplacian operator on $L^2(\Omega_\varepsilon^\alpha)$, i.e., the self-adjoint
operator  associated with the quadratic form
\begin{equation}\label{quadformneuintro}
\tilde{b}_\varepsilon (\psi) = \int_{\Omega_\varepsilon^\alpha} |\nabla \psi|^2 d \vec{x}, \quad 
\dom \tilde{b}_\varepsilon =  H^1(\Omega_\varepsilon^\alpha).
\end{equation}

Since we are going to use the $\Gamma$-convergence technique (see Appendix \ref{app1} and \cite{maso}),
our analysis is based on the study of the sequence  $(\tilde{b}_\varepsilon)_\varepsilon$.
To simplify the calculations, it is convenient
a change of variables. Using the change of coordinates described in Section \ref{geodom},
the quadratic form 
$\tilde{b}_\varepsilon$ becomes
\begin{equation}\label{enerdivostintro}
\hat{b}_\varepsilon(\psi) = \int_{Q} \left(
\left| \psi' + \langle \nabla_y \psi, R y \rangle \alpha'(s) \right|^2
+ \frac{|\nabla_y \psi|^2}{\varepsilon^2}
\right) \ds \dy,
\end{equation}
$\dom \hat{b}_\varepsilon = H^1(Q)$, $Q:=I \times S$. Here, $\psi' := \partial \psi/ \partial s$, 
$\nabla_y \psi := (\partial \psi/\partial y_1, \partial \psi/\partial y_2)$ and 
$R$ is the rotation matrix
$\left(\begin{array}{cc}
0 & -1 \\
1 & 0
\end{array}\right)$.
Denote by $-\hat{\Delta}^\varepsilon$ the self-adjoint operator associated with 
$\hat{b}^\varepsilon$.

When the waveguide is ``squeezed'', i.e, $\varepsilon \to 0$, $-\Delta^N_{\Omega_\varepsilon^\alpha}$ 
presents divergent eigenvalues due to the transverse oscillations in $\Omega_\varepsilon^\alpha$; 
one can see this by the presence of the term $(1/\varepsilon^2) \int_Q |\nabla_y \psi|^2 \ds\dy$ in 
(\ref{enerdivostintro}).
To control this divergent energies, we will take the following 
strategy:
let $-\Delta^N_S$ be the Neumann Laplacian operator restricted to $S$, i.e., the self-adjoint operator
associated with  the  quadratic form $u \mapsto \int_S |\nabla_y u|^2 \dy$, $u \in H^1(S)$.
Denote by $\lambda_n$ the $n$th eigenvalue of $-\Delta^N_S$ and by $u_n$ the corresponding normalized eigenfunction, i.e., 
\[0=\lambda_1<\lambda_2<\lambda_3<\cdots, \quad \quad 
-\Delta^N_Su_n=\lambda_nu_n,\quad n=1,2,3,\cdots.\]
We assume that each eigenvalue $\lambda_n$ is simple;
note that $u_1$ is a constant function.

Fixed $n \in \mathbb N$, our strategy 
is to study the sequence 
\[\hat{b}_n^\varepsilon(\psi) := \hat{b}_\varepsilon(\psi)-\frac{\lambda_n}{\varepsilon^2}\|\psi\|^2_{L^2(Q)},\]
$\dom \hat{b}^\varepsilon_n:= H^1(Q)$. 
Denote by $\hat{T}_n^\varepsilon$ the self-adjoint operator associated with
$\hat{b}_n^\varepsilon$; this can be done since
each quadratic form $\hat{b}_n^\varepsilon$ is closed and lower bounded in $L^2(Q)$. Namely,
$\hat{T}_n^\varepsilon = -\hat{\Delta}^\varepsilon - (\lambda_n/\varepsilon^2) \Id$;
$\Id$ denotes the identity operator.

It is usual in the literature to consider only the case $n=1$, i.e.,
since $\lambda_1=0$, to study directly the sequence of quadratic forms $\hat{b}_\varepsilon(\psi)$.
The idea to consider $n\neq 1$ 
is based on 
\cite{cesargamma}; the author considered the Dirichlet Laplacian operator restricted to
a thin waveguide with the goal of finding the effective operator. In that case, the action of the effective operator is the same 
for $n=1$ or $n\neq 1$ and depends on the geometry of the waveguide.

Now, for each $n \in \mathbb N$, consider the closed subspaces 
\[\mathcal{L}_n:=\{w(s)u_n(y):w \in L^2(I)\} \quad \hbox{and} \quad
\mathcal{K}_n:=\{w(s)u_n(y):w\in H^1(I)\}\]
of $L^2(Q)$ and $H^1(Q)$, respectively. We have the decompositions
\[L^2(Q)=\mathcal{L}_1\oplus\mathcal{L}_2\oplus\mathcal{L}_3\oplus\cdots,
\quad
H^1(Q)=\mathcal{K}_1\oplus\mathcal{K}_2\oplus\mathcal{K}_3\oplus\cdots,\]
and  each $\mathcal{K}_n$ is a dense subspace of $\mathcal{L}_n$.

Let $T_1w:= -w''$ be the one-dimensional Laplacian operator with domain
$\dom T_1 = H^2(\mathbb R)$ if $I=\mathbb R$, or $\dom T_1 = \{w \in H^2(I): w'(a)=w'(b) =0\}$ if
$I=(a,b)$. Denote by ${\bf 0}$ the null operator on the subspace
${\cal L}_1^\perp$.	
In  the particular case $n=1$, it is known that 	 
$\hat{T}^\varepsilon_1 \approx T_1 \oplus {\bf 0}$, as $\varepsilon \to 0$; see \cite{ricardo}.
As already commented, we can note that the effective operator in this situation does not depend on
the geometry of the waveguide. 

The main goal of this work is to study the sequence  
$(\hat{T}^\varepsilon_n)_\varepsilon$ (for each $n>1$ fixed), and to characterize the effective operator in the limit $\varepsilon \to 0$.
However, some adjustments will be necessary so that the limit exists in some sense.
The interesting point in this situation is that we find an effective operator that depends on
the geometry of the waveguide. To our knowledge, this fact is not known yet. 

In order to study the sequence $(\hat{T}^\varepsilon_n)_\varepsilon$,
it will be necessary some considerations.
If $v(s,y)=w(s)u_j(y)$ with $w \in H^1(I)$, some calculations show that
\begin{eqnarray*}
\hat{b}^\varepsilon_n(v) & =& \int_{Q}|w' u_j+\langle\nabla_y u_j,Ry\rangle\alpha'(s)w|^2 \ds\dy 
+ 	
\frac{1}{\varepsilon^2}\int_{Q}
\left(|\nabla_yu_j|^2-\lambda_n|u_j|^2\right)|w|^2 \ds \dy \\
& =  & \int_{Q}|w' u_j + \langle \nabla_y u_j, Ry \rangle \alpha'(s)w|^2 \ds\dy 
+ \frac{(\lambda_j-\lambda_n)}{\varepsilon^2}\|w\|^2_{L^2(I)},
\end{eqnarray*}
i.e., for $j<n$,
\begin{equation}\label{procdivexvec}
\lim\limits_{\varepsilon\rightarrow 0} \hat{b}^\varepsilon_n(v)=-\infty.
\end{equation}
Thus, the sequence $(\hat{b}^\varepsilon_n(v))_\varepsilon$ is not bounded from bellow.
Therefore, to study  $(\hat{b}_n^\varepsilon)_\varepsilon$, 
it will be necessary to exclude some vectors of the domains $\dom \hat{b}_n^\varepsilon$.
Based on (\ref{procdivexvec}), the procedure for this problem is as follows.  
We define the Hilbert spaces 
\begin{equation}\label{hilbertspaceHn}
{\cal H}_n:=\left\{\begin{array}{ll}
L^2(Q),\quad & n=1,\\
(\mathcal{L}_1\oplus\mathcal{L}_2\oplus\cdots\oplus\mathcal{L}_{n-1})^\bot, \quad & n= 2,3,\cdots,\end{array}\right.
\end{equation} 
equipped with the norm of $L^2(Q)$.
Then, we  consider the sequence of quadratic forms acting in ${\cal H}_n$;
\begin{equation}\label{quadrestintro}
\overline{b}_n^\varepsilon(\psi) = \int_Q
\left(|\psi'+\langle\nabla_y\psi,Ry\rangle\alpha'(s)|^2+\frac{1}{\varepsilon^2}|\nabla_y\psi|^2\right) \ds\dy,
\end{equation}
$\dom \overline{b}_n^\varepsilon = H^1(Q)\cap {\cal H}_n$, and we denote by $-\Delta_n^\varepsilon$ the self-adjoint operator 
on ${\cal H}_n$ associated with it.
Finally, define
\begin{equation}\label{bn}
b_n^\varepsilon(\psi) : =  \overline{b}_n^\varepsilon(\psi) - \frac{\lambda_n}{\varepsilon^2} \|\psi\|_{{\cal H}_n}^2,
\end{equation}
$\dom b^\varepsilon_n := H^1(Q)\cap {\cal H}_n$.
Denote by $T_n^\varepsilon$  the self-adjoint operator associated with $b_n^\varepsilon$ 
which is a positive and closed quadratic form; $T^\varepsilon_n$ acts in the Hilbert space ${\cal H}_n$.
Namely, 
$T_n^\varepsilon = -\Delta_n^\varepsilon - (\lambda_n/\varepsilon^2) \Id$.
Then, we are going to study the sequence $(T^\varepsilon_n)_\varepsilon$ instead of $(\hat{T}_n^\varepsilon)_\varepsilon$.

Let $b_n$ be the one-dimensional quadratic form 
\begin{equation}\label{qlastbn} 
b_n(w)  := b_n^\varepsilon(wu_n) 
=\int_{Q}|w' u_n+\langle\nabla_y u_n, Ry \rangle \alpha'(s)w|^2 \ds\dy,
\end{equation}
$\dom b_n = H^1(I)$. In fact, $b_n$ is obtained by  the restriction of
$b^\varepsilon_n$ to the space $\mathcal{K}_n$.
Denote by $T_n$ the self-adjoint operator associated with $b_n$. 

For each $n \in \mathbb N$, define the constants
\begin{equation}\label{effectivepotential02}
C_n^1(S) := \int_S|\langle\nabla_yu_n,Ry\rangle|^2 \dy, 
\quad
C_n^2(S) :=  \int_S u_n\langle\nabla_yu_n, Ry\rangle \dy,
\end{equation} 
and  the real potential 
\begin{equation}\label{effectivepotential}
V_n(s):=C^1_n(S)(\alpha'(s))^2  -  C^2_n(S)\alpha''(s).
\end{equation}
By considerations of Appendix \ref{app001},
\begin{equation}\label{effectivepotential01}
T_n w = -w'' + V_n(s) w, 
\end{equation}
where
$\dom T_n = H^2(\mathbb R)$ if $I=\mathbb R$ and,
\begin{equation}\label{domintrorobin}
\dom T_n = \left\{ w \in H^2(I): 
\begin{array}{l}
w'(a) = -C_n^2(S) \alpha'(a) w(a) \\
w'(b) = -C_n^2(S) \alpha'(b) w(b)
\end{array} \right\}
\end{equation}
if $I=(a,b)$.
In the latter, we have the Robin conditions in $\dom T_n$.

Now, we present the first result of this work.

\begin{Theorem}\label{maintheorem} 
(A) For each $n \in \mathbb N$ fixed, the sequence of self-adjoint operators $T^\varepsilon_n$ converges in the strong resolvent sense 
to $T_n$ in $\mathcal{L}_n$, as $\varepsilon\rightarrow 0$. That is,
\[\lim_{\varepsilon \to 0} R_{-\lambda} (T_n^\varepsilon) \zeta = R_{-\lambda}(T_n) P \zeta,
\quad \forall \zeta \in {\cal H}_n, \forall \lambda > 0,\]
where $P$ is the orthogonal projection onto ${\cal L}_n$.

(B) In addition, suppose $I=(a,b)$ a bounded interval. Denote by 
$\mu_j(\varepsilon)$ (resp. $\mu_j$) the $j$th eigenvalue of $-\Delta_n^\varepsilon$ (resp. $T_n$)
counted according to its multiplicity.
Then, for each $j \in \mathbb N$,
\begin{equation}\label{beasymp}
\mu_j = \lim_{\varepsilon \to 0} \left(\mu_j(\varepsilon) - \frac{\lambda_n}{\varepsilon^2}\right).
\end{equation}
\end{Theorem}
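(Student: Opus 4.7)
My plan is to establish Mosco-convergence of the closed nonnegative forms $(b_n^\varepsilon)_\varepsilon$ in $\mathcal{H}_n$ to the extended form
\[b_n^\infty(\psi):=\begin{cases}b_n(w), & \psi=wu_n\in\mathcal{K}_n,\\ +\infty, & \psi\in\mathcal{H}_n\setminus\mathcal{K}_n,\end{cases}\]
from which both parts would follow by standard principles: Mosco-convergence of closed nonnegative forms is equivalent to strong resolvent convergence of the associated self-adjoint operators, delivering (A)---the projection $P$ onto $\mathcal{L}_n=\overline{\mathcal{K}_n}$ appears precisely because $b_n^\infty$ takes the value $+\infty$ on $\mathcal{L}_n^\perp\cap\mathcal{H}_n$; and when $Q=(a,b)\times S$ is bounded, Mosco-convergence together with equicoercivity upgrades automatically to convergence of the min-max eigenvalues (cf.\ \cite{maso}), which is (B).

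\textbf{Key identity and $\liminf$-inequality.} For $\psi\in\mathcal{H}_n$, I would expand $\psi=\sum_{j\geq n}w_j(s)u_j(y)$ in the $L^2(S)$-basis $(u_j)$. Using $-\Delta_S^N u_j=\lambda_j u_j$ and orthonormality,
\[\int_Q\bigl(|\nabla_y\psi|^2-\lambda_n|\psi|^2\bigr)\ds\dy=\sum_{j>n}(\lambda_j-\lambda_n)\|w_j\|_{L^2(I)}^2\geq 0,\]
so the transverse term of $b_n^\varepsilon$ is nonnegative on $\mathcal{H}_n$, and any sequence with $\sup_\varepsilon b_n^\varepsilon(\psi_\varepsilon)<\infty$ satisfies $\|w_j^\varepsilon\|_{L^2(I)}=O(\varepsilon)$ for each $j>n$. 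Combined with the bound $\|\psi_\varepsilon'+\la\nabla_y\psi_\varepsilon,Ry\ra\alpha'\|_{L^2(Q)}\leq C$ and the boundedness of $|Ry|$ on $S$, this makes $\psi_\varepsilon$ bounded in $H^1(Q)$. Given weak convergence $\psi_\varepsilon\rightharpoonup\psi$ in $\mathcal{H}_n$, I would pass to a subsequence that also converges weakly in $H^1(Q)$; the $O(\varepsilon)$-decay of the higher modes then forces the limit to have the form $\psi=wu_n$ with $w\in H^1(I)$. Since the longitudinal map $\psi\mapsto\int_Q|\psi'+\la\nabla_y\psi,Ry\ra\alpha'|^2\ds\dy$ is the squared norm of a bounded linear operator $H^1(Q)\to L^2(Q)$, it is weakly lower semicontinuous, and its value at $wu_n$ equals $b_n(w)$; adding the nonnegative transverse contribution yields $\liminf b_n^\varepsilon(\psi_\varepsilon)\geq b_n(w)=b_n^\infty(\psi)$.

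\textbf{$\limsup$-inequality and identification of $T_n$.} For $\psi=wu_n\in\mathcal{K}_n$ I would use the constant recovery sequence $\psi_\varepsilon\equiv wu_n$: it has only the $n$-th transverse mode, so its transverse penalty vanishes identically and $b_n^\varepsilon(wu_n)=b_n(w)$ for every $\varepsilon$; outside $\mathcal{K}_n$ no $\limsup$-bound is needed because $b_n^\infty=+\infty$ there. The identification $T_n w=-w''+V_n w$ with the Robin domain (\ref{domintrorobin}) then follows by integrating $b_n(w)$ by parts, producing $\int_I[(w')^2+V_n w^2]\ds$ plus boundary contributions $C_n^2(S)\bigl[\alpha'(b)|w(b)|^2-\alpha'(a)|w(a)|^2\bigr]$ whose vanishing in the Euler--Lagrange problem imposes the stated Robin conditions; this is carried out in Appendix \ref{app001}.

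\textbf{Equicoercivity and the main obstacle.} On bounded $Q=(a,b)\times S$, the Rellich--Kondrachov embedding $H^1(Q)\hookrightarrow L^2(Q)$ is compact, so the $H^1$-boundedness established above upgrades to $L^2$-precompactness, providing the equicoercivity needed for (B); the min-max characterization then yields (\ref{beasymp}). The main technical obstacle is the $\liminf$-step: a direct mode-by-mode expansion of the longitudinal energy produces cross contributions $\int_I(w_j^\varepsilon)'\alpha'w_k^\varepsilon\la u_j,\la\nabla_y u_k,Ry\ra\ra_{L^2(S)}\ds$ that are awkward to control individually. Treating the longitudinal energy as a single convex quadratic functional on $H^1(Q)$ and invoking weak lower semicontinuity sidesteps this bookkeeping, provided one first promotes the $L^2$-decay of the higher modes into weak $H^1(Q)$-compactness of the full sequence $\psi_\varepsilon$, which the uniform bound on $\psi_\varepsilon'+\la\nabla_y\psi_\varepsilon,Ry\ra\alpha'$ indeed provides.
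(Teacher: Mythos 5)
Your proposal is correct and follows essentially the same strategy as the paper: Mosco-convergence of the closed quadratic forms is equivalent (for quadratic forms) to the pair of strong and weak $\Gamma$-convergences that the paper establishes in Propositions \ref{propstrong} and \ref{propweaky}, the mode decomposition that bounds the transverse energy from below, the constant recovery sequence, the identification of $T_n$ with its Robin domain, and the Rellich--Kondrachov compactness used for part (B) are all the same ingredients the paper uses, and the passage from form convergence to resolvent and eigenvalue convergence invokes the same abstract results recorded in Appendix \ref{app1}. The one presentational difference is in the weak $\liminf$-inequality when the weak limit does not lie in $\mathcal{K}_n$: you argue by contradiction from the $O(\varepsilon)$-decay of the higher transverse modes, while the paper proves divergence directly via the explicit lower bound $b_n^\varepsilon(v_\varepsilon)\geq(\lambda_{n+1}-\lambda_n)\varepsilon^{-2}\|P_{n+1}v_\varepsilon\|^2$; both routes are valid.
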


In the next paragraphs we treat the periodic case.

Consider the twisted waveguide $\Omega_\varepsilon^\alpha$
in the particular case where $I=\mathbb R$ and  $\alpha:\mathbb{R}\rightarrow\mathbb{R}$ is a $C^2$ and periodic function,  
i.e., there exists $L>0$ so that $\alpha(s+L)=\alpha(s)$, for all $s\in \mathbb R$.
The second goal of this work 
is to find spectral properties of $-\Delta_n^\varepsilon$  in this situation.
We have

\begin{Theorem}\label{theoremperiodiccase}
For each $n \in \mathbb N$ and
for each $E>0$, there exists $\varepsilon_E>0$ so that the spectrum of $-\Delta^\varepsilon_n$ is absolutely 
continuous in the interval $[0, E+\lambda_n/\varepsilon^2]$, for all $\varepsilon\in (0,\varepsilon_E)$.
\end{Theorem}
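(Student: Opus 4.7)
The idea is to combine the Floquet--Bloch decomposition with a fiberwise version of the spectral convergence established in Theorem~\ref{maintheorem}(B). Since $\alpha$ has period $L$, the quadratic form $\overline b_n^\varepsilon$ commutes with the translation $s\mapsto s+L$; the Floquet--Bloch transform in $s$ therefore decomposes
\begin{equation*}
-\Delta_n^\varepsilon \;\cong\; \int_{[-\pi/L,\pi/L]}^{\oplus}\!\!(-\Delta_n^\varepsilon)(\theta)\,d\theta,
\end{equation*}
where each fiber $(-\Delta_n^\varepsilon)(\theta)$ is self-adjoint on the relevant subspace of $L^2((0,L)\times S)$ with $\theta$-quasi-periodic conditions in $s$ and Neumann conditions in $y$. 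The constraint defining $\mathcal H_n$ only involves the transverse modes and is preserved fiberwise. Each fiber has compact resolvent (the cell $(0,L)\times S$ is bounded), so it has purely discrete spectrum $\{\mu_j^\varepsilon(\theta)\}_{j\ge 1}$, and by analytic perturbation theory each $\theta\mapsto \mu_j^\varepsilon(\theta)$ is real-analytic. The standard Bloch criterion then says that the spectrum of $-\Delta_n^\varepsilon$ is purely absolutely continuous in any energy window whose bands are non-constant.

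The limiting operator $T_n=-\partial_s^2+V_n$ admits the analogous decomposition $T_n=\int^{\oplus}T_n(\theta)\,d\theta$, with $T_n(\theta)$ acting on $(0,L)$ with $\theta$-quasi-periodic endpoints. Since $V_n\in L^\infty(\mathbb R)$ is $L$-periodic, $T_n$ is a classical one--dimensional periodic Schr\"odinger operator, hence has purely absolutely continuous spectrum with the familiar band--gap structure; equivalently, every band function $\mu_j(\theta)$ is a non-constant real-analytic function of $\theta$. Repeating the proof of Theorem~\ref{maintheorem}(B) on the bounded cell $(0,L)\times S$ with $\theta$-quasi-periodic instead of Neumann end-conditions (the $\Gamma$-convergence argument is not affected, as the boundary contribution enters additively) yields, for each $\theta$ and each $j$,
\begin{equation*}
\mu_j^\varepsilon(\theta)-\frac{\lambda_n}{\varepsilon^2}\;\xrightarrow[\varepsilon\to 0]{}\;\mu_j(\theta).
\end{equation*}

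Fix $E>0$. Only finitely many bands $\mu_1,\dots,\mu_N$ of $T_n$ can meet $(-\infty,E]$. For each $j\le N+1$ choose $\theta_j^\pm$ with $\mu_j(\theta_j^+)\ne \mu_j(\theta_j^-)$, and let $\delta>0$ be half the minimum of the gaps $|\mu_j(\theta_j^+)-\mu_j(\theta_j^-)|$. Applying the fiberwise convergence at the finitely many values $\theta_j^\pm$ produces $\varepsilon_E>0$ such that, for all $\varepsilon\in(0,\varepsilon_E)$ and $j\le N+1$,
\begin{equation*}
\bigl|\mu_j^\varepsilon(\theta_j^+)-\mu_j^\varepsilon(\theta_j^-)\bigr|\;>\;\delta.
\end{equation*}
Real-analyticity then implies that the first $N+1$ bands $\mu_j^\varepsilon(\cdot)$ are non-constant. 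A uniform Weyl-type bound, derived from the same compactness that underlies Theorem~\ref{maintheorem}(B), ensures that no band of $-\Delta_n^\varepsilon$ beyond the $(N+1)$-st can enter $[0,E+\lambda_n/\varepsilon^2]$ for $\varepsilon$ sufficiently small. Consequently every band touching the target window is non-constant, and $\sigma(-\Delta_n^\varepsilon)\cap[0,E+\lambda_n/\varepsilon^2]$ is purely absolutely continuous.

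\textbf{Main obstacle.} The delicate point is the uniformity of the Floquet analysis in $\theta$ and $\varepsilon$: one needs to control simultaneously the number of bands entering the target energy window and the convergence $\mu_j^\varepsilon(\theta)-\lambda_n/\varepsilon^2\to \mu_j(\theta)$. This is essentially a uniform-in-$\theta$ upgrade of Theorem~\ref{maintheorem}(B), which should be provided by the same $\Gamma$-convergence estimates, but must be verified for the quasi-periodic fiber.
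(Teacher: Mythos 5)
Your proposal follows the same route as the paper: Floquet--Bloch decomposition into compact-resolvent fibers $-\Delta_n^\varepsilon(\theta)$, analyticity of the family in $\theta$ (type~(B)), convergence of the shifted fiber eigenvalues $E_{n,j}(\varepsilon,\theta)-\lambda_n/\varepsilon^2\to k_{n,j}(\theta)$, nonconstancy of the limiting band functions (Reed--Simon, Ch.~XIII.16), and finally the Thomas/Bloch criterion for absolute continuity. So the architecture is correct and matches the paper.

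The one place where your write-up stops short is exactly where you flag your ``main obstacle,'' and it is worth naming the tool that closes it. You invoke pointwise convergence at two points $\theta_j^\pm$ per band to get nonconstancy (which is fine), but for the claim that only the first $N+1$ bands of $-\Delta_n^\varepsilon$ enter the window $[0,E+\lambda_n/\varepsilon^2]$ you appeal to an unproved ``uniform Weyl-type bound.'' The paper replaces this by a concrete observation: on ${\cal H}_n^L$ the form
$b_n^\varepsilon(\theta)(\psi)=\hat b_n^\varepsilon(\theta)(\psi)-\tfrac{\lambda_n}{\varepsilon^2}\|\psi\|^2$
contains the term $\tfrac{1}{\varepsilon^2}\int_{Q_L}\bigl(|\nabla_y\psi|^2-\lambda_n|\psi|^2\bigr)$, which is nonnegative by the min--max bound on ${\cal W}_{n-1}$ and is increasing as $\varepsilon$ decreases. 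Hence $E_{n,j}(\varepsilon,\theta)-\lambda_n/\varepsilon^2$ is monotone decreasing in $\varepsilon$ and bounded below by $k_{n,j}(\theta)$. Monotonicity plus Dini's theorem on the compact Brillouin zone $\overline{\cal C}$ (Corollary~\ref{corbasynw}) gives \emph{uniform} convergence in $\theta$, and the lower bound $E_{n,j}(\varepsilon,\theta)-\lambda_n/\varepsilon^2\ge k_{n,j}(\theta)\ge\min_\theta k_{n,j}(\theta)\to\infty$ simultaneously furnishes the uniform band count you were after. Adding this monotonicity argument makes your ``Weyl-type bound'' precise and would complete the proof; without it the count of bands entering the window is not justified.
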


\begin{Theorem}\label{gap-exists}
Suppose that $V_n(s)$ is not a constant function in $[0,L]$.
For each $n \in \mathbb N \backslash \{1\}$,
there exist $j\in \mathbb{N}$ and $\varepsilon_j > 0$,  so that, for all $\varepsilon\in (0,\varepsilon_j)$, the spectrum of the operator $-\Delta^\varepsilon_n$ has at least one gap. 
\end{Theorem}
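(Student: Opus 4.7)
The plan is to exploit the $L$-periodicity of $\alpha$ in order to apply the Floquet--Bloch decomposition to $-\Delta_n^\varepsilon$ and to the limiting one-dimensional operator $T_n$ in parallel. Since $V_n$ is non-constant, a classical result of Borg will guarantee at least one open spectral gap of $T_n$. That gap should then be transferred to $-\Delta_n^\varepsilon$ for small $\varepsilon$ via a fibered version of the eigenvalue convergence of Theorem \ref{maintheorem}(B).

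More concretely, set $Q_L := (0,L) \times S$. For each quasi-momentum $\theta \in [-\pi/L, \pi/L)$ I would introduce the closed positive form $\overline{b}_{n,\theta}^\varepsilon$ given by the same integrand as in (\ref{quadrestintro}), but on the space of $\theta$-quasi-periodic $H^1$ functions on $Q_L$ (still orthogonal to $u_1,\ldots,u_{n-1}$), and let $-\Delta_n^\varepsilon(\theta)$ be its self-adjoint realization. Each fiber has compact resolvent, so its spectrum consists of discrete eigenvalues $\nu_j^\varepsilon(\theta)$, $j \in \N$, and
\[
\sigma(-\Delta_n^\varepsilon) \;=\; \bigcup_{j\in\N} \{ \nu_j^\varepsilon(\theta) : \theta \in [-\pi/L, \pi/L) \}.
\]
I would decompose $T_n$ on $L^2(\R)$ in the same way into fibers $T_n(\theta)$ on $(0,L)$ with $\theta$-quasi-periodic boundary conditions and eigenvalues $\mu_j(\theta)$.

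The non-constancy of the $L$-periodic potential $V_n$, together with the Borg inverse-spectral theorem for Hill operators, produces $j_0 \in \N$ and $\delta > 0$ such that $\max_\theta \mu_{j_0}(\theta) + 2\delta < \min_\theta \mu_{j_0+1}(\theta) - 2\delta$. A fiber-wise repetition of the arguments behind Theorem \ref{maintheorem}(B)---in which replacing Neumann by $\theta$-quasi-periodic boundary conditions only affects the admissible boundary traces, not the coefficients of the form---yields $\nu_j^\varepsilon(\theta) - \lambda_n/\varepsilon^2 \to \mu_j(\theta)$ as $\varepsilon \to 0$, for each $j$ and each fixed $\theta$. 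Assuming that this convergence is uniform in $\theta$, one obtains, for all sufficiently small $\varepsilon$, $\max_\theta \nu_{j_0}^\varepsilon(\theta) < \min_\theta \nu_{j_0+1}^\varepsilon(\theta)$, so that the open interval between these two values lies in the resolvent set of $-\Delta_n^\varepsilon$; this is the required gap.

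The hard part is exactly the uniformity in $\theta$ invoked at the end. Pointwise-in-$\theta$ convergence is a direct adaptation of the tools already developed for Theorem \ref{maintheorem}, but uniformity requires verifying that all the constants entering the $\Gamma$-convergence estimates (ellipticity of $\overline{b}_{n,\theta}^\varepsilon$, Poincar\'e-type inequalities on the transverse section, construction of recovery sequences) can be chosen independently of $\theta$, which appears only through trace matching at $s=0$ and $s=L$. Once this independence is confirmed, compactness of the Brillouin zone together with continuity of $\theta \mapsto \nu_j^\varepsilon(\theta)$ and $\theta \mapsto \mu_j(\theta)$ delivers the uniform estimate and closes the argument.
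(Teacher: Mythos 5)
Your overall architecture matches the paper's: Floquet--Bloch reduction of $-\Delta_n^\varepsilon$ and of the limiting one-dimensional operator, Borg's inverse-spectral theorem to detect a gap in the limiting Hill operator from the non-constancy of $V_n$, fiber-wise eigenvalue convergence $E_{n,j}(\varepsilon,\theta)-\lambda_n/\varepsilon^2 \to k_{n,j}(\theta)$, and then promotion of this convergence to a uniform-in-$\theta$ statement to transfer the gap to $-\Delta_n^\varepsilon$ for small $\varepsilon$. The step you flag as ``the hard part'' --- uniformity in $\theta$ --- is indeed exactly where the paper needs a genuine argument, but your sketch of how to obtain it is flawed: pointwise convergence of continuous (even analytic) functions on a compact set does not imply uniform convergence, so ``compactness of the Brillouin zone together with continuity of $\theta\mapsto \nu_j^\varepsilon(\theta)$ and $\theta\mapsto \mu_j(\theta)$'' does not by itself deliver the uniform estimate; and tracking $\theta$-independence of all the $\Gamma$-convergence constants would at best yield a uniform rate only after substantial additional work, since $\Gamma$-convergence does not come with rates.

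The paper closes this gap with a much more elementary observation (Corollary \ref{corbasynw}): since $\hat b_n^{\varepsilon_2}(\theta)(\psi) \le \hat b_n^{\varepsilon_1}(\theta)(\psi)$ whenever $\varepsilon_1 < \varepsilon_2$, the min--max principle shows that for each fixed $j$ and $\theta$ the shifted eigenvalue $E_{n,j}(\varepsilon,\theta) - \lambda_n/\varepsilon^2$ is monotone decreasing in $\varepsilon$. Combining this monotonicity with the continuity (in fact analyticity) of the limit $k_{n,j}(\theta)$ on the closed Brillouin zone, Dini's theorem yields uniform convergence. That single observation replaces the entire constant-tracking enterprise you describe. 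If you insert the monotonicity-in-$\varepsilon$ remark and invoke Dini, your proof is complete and essentially identical to the paper's; as written, the uniformity step is a genuine gap.
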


Furthermore, in Theorem \ref{location} in Section \ref{locationbandsec}, we find a location in $\sigma(-\Delta_n^\varepsilon)$
where Theorem \ref{gap-exists} holds true. 

The proof of Theorem \ref{gap-exists} is based on the fact that $V_n(s)$ is not a constant function in $[0,L]$. 
Due to this reason, we eliminate the case $n=1$ since $V_1(s) \equiv 0$.
	
This work is separated as follows.
In Section \ref{geodom} we perform the change of variables to obtain (\ref{enerdivostintro})
and in Section \ref{proof} we prove Theorem \ref{maintheorem}.
Section \ref{specprsecsub} is dedicated to the periodic case and is separated in subsections.
In Subsection \ref{preressub} we present some preliminary results and in Subsection 
\ref{prooftheosub} we prove Theorem \ref{theoremperiodiccase}.
Subsection \ref{subsection-gaps} is dedicated to prove Theorem \ref{gap-exists} and in
Subsection \ref{locationbandsec} we study the location of band gaps.
A long the text, the symbol $K$ is used to denote different constants and it never depends
on $\theta$.

\section{Geometry of the domain}\label{geodom}

Recall the quadratic form $\tilde{b}_\varepsilon$ defined by (\ref{quadformneuintro}).
In this section we  perform a usual change of variables so that  the domain $\dom \tilde{b}_\varepsilon$  
becomes independent of $\varepsilon$. 
Then, consider the mapping
$$\begin{array}{cccl}
F_{\varepsilon}: & Q & \rightarrow & \Omega_{\varepsilon}^\alpha\\
& (s,y_1,y_2) & \mapsto & \Gamma^\alpha_\varepsilon(s)(s,y_1,y_2)^t
\end{array},$$
where $\Gamma^\alpha_\varepsilon(s)$ is given by $(\ref{gammarotation})$;
$F_{\varepsilon}$ will be a (global) diffeomorphism for  $\varepsilon  > 0$ small enough.
 
In the new variables the domain $\dom \tilde{b}_\varepsilon$ turns to be $H^1(Q)$. On the other hand, the price to be paid is a nontrivial Riemannian metric $G=G^{\alpha}_{\varepsilon}$ which is induced by $F_\varepsilon$ i.e., 
$G=(G_{ij})$, $G_{ij}=\langle e_i,e_j\rangle$, $1\leq i, j\leq 3$,
where $e_1=\partial F_{\varepsilon}/\partial s,$ $e_2=\partial F_{\varepsilon}/\partial y_1$, 
and $e_3= \partial F_{\varepsilon}/\partial y_2$.
Some calculations show that 
\[J=\left(\begin{matrix}
e_1\\
e_2\\
e_3
\end{matrix}\right)=\left(\begin{matrix}
1 & -\varepsilon \alpha'(s)\langle z^{\bot}_{\alpha}(s),y\rangle & \varepsilon \alpha'(s)\langle z_{\alpha}(s),y\rangle,\\
0 & \varepsilon \cos\alpha(s) & \varepsilon \sin \alpha(s) \\
0 & -\varepsilon \sin \alpha(s) & \varepsilon \cos\alpha(s)
\end{matrix}\right),\]
where  
\[z_{\alpha}(s) := (\cos\alpha(s),-\sin\alpha(s)), \quad z^{\bot}_{\alpha}(s):=(\sin\alpha(s),\cos\alpha(s)).\]
The inverse matrix of $J$ is given by 
$$J^{-1}=\left( \begin{matrix}
1 & \alpha'(s)y_2 & -\alpha'(s)y_1\\
0 & (\cos\alpha(s))/\varepsilon & -(\sin\alpha(s))/\varepsilon\\
0 & (\sin\alpha(s))/\varepsilon & (\cos\alpha(s))/\varepsilon 
\end{matrix}\right).$$

Note that $JJ^{t}=G$ and $\det J=|\det G|^{1/2}=\varepsilon^2>0$. Thus, $F_\varepsilon$ is a local diffeomorphism. 
In the case that $F_\varepsilon$ is injective (for this, just consider $\varepsilon > 0$ small enough), 
a global diffeomorphism  is obtained. 

Introducing the unitary transformation
\[\begin{array}{cccc}
U_\varepsilon: & {L}^2(\Omega^\alpha_\varepsilon) & \rightarrow & {L}^2(Q)\\
& \phi & \mapsto  & \varepsilon \phi \circ F_\varepsilon
\end{array},\]
we obtain the quadratic form 
\begin{eqnarray*}
\hat{b}_\varepsilon(\psi) &:=& \tilde{b}_\varepsilon(U_\varepsilon^{-1}\psi) = \|J^{-1}\nabla\psi\|_{L^2(Q)}^2 \\
&=&\int_{Q} \left(
\left| \psi' + \langle \nabla_y \psi, R y \rangle \alpha'(s) \right|^2
+ \frac{|\nabla_y \psi|^2}{\varepsilon^2}
\right) \ds \dy,
\end{eqnarray*}
$\dom \hat{b}_\varepsilon = H^1(Q)$.
Recall $R$  is the rotation matrix $\left(\begin{array}{cc}
0 & 1\\
-1 & 0 
\end{array}\right)$  and $-\hat{\Delta}_\varepsilon$ denotes the self-adjoint operator associated with $\hat{b}_\varepsilon$.
We have $ -\hat{\Delta}_\varepsilon \psi =
U_\varepsilon(-\Delta^N_{\Omega_\varepsilon^\alpha})U^{-1}_\varepsilon \psi$, where 
$\dom (-\hat{\Delta}_\varepsilon) = U_\varepsilon( \dom (-\Delta^N_{\Omega_\varepsilon^\alpha}) )$.

\section{Preliminary results and poof of Theorem \ref{maintheorem}}\label{proof}

This section is dedicated to the proof of Theorem \ref{maintheorem}. The strategy is based on the
study of the sequence $(b_n^\varepsilon)_\varepsilon$ (see (\ref{bn}) in the Introduction) and some preliminary results will be necessary. 
We start with some considerations.
Denote  by $[u_1,u_2,\dots,u_k]$ the subspace of $L^2(S)$ generated by $\{u_1,u_2,\dots, u_k\}$. 
Since the subspace ${\cal W}_k := [u_1,u_2,\dots,u_k]^\bot$  is invariant under the operator $-\Delta^N_S$, 
the restriction $-\Delta^N_S|_{{\cal W}_k}$ is well defined and its first eigenvalue  is $\lambda_{k+1}$. 
Denote by $q_k$  the quadratic form associated with $-\Delta^N_S|_{{\cal W}_k}$. 
We have
\begin{equation}\label{minmaxins}
q_k(v)\geq \lambda_{k+1}\|v\|^2_{L^2(S)},\quad \forall v \in {\cal W}_k\cap H^1(S).
\end{equation}

To study the sequence $(b_n^\varepsilon)_\varepsilon$ we are going to
use the $\Gamma$-convergence technique; see Appendix \ref{app1}.
Then, it is necessary to extend  $b^\varepsilon_n$  on ${\cal H}_n$ by setting (we denote by the same symbol)
$$b_n^\varepsilon(v)=\left\{\begin{array}{lc}
b_n^\varepsilon(v),& \text{if} \,v\in {\rm dom }\, b^\varepsilon_n,\\
+\infty,& \text{otherwise}.
\end{array}\right.$$	
In a similar way, we extend $b_n$ on ${\cal H}_n$;
$$b_n(v)=\left\{\begin{array}{cc}
b_n(w), \quad &\text{if}\,\, v=wu_n \,\text{ with }\, w\in {\rm dom}\, b_n,\\
+\infty, \quad &\text{otherwise};
\end{array}\right.$$
recall the definition of $b_n$ by (\ref{qlastbn}) in the Introduction.

\begin{Lemma}\label{lem1}
If $v_\varepsilon \rightharpoonup v$ in ${\cal H}_n$ and $(b_n^\varepsilon(v_\varepsilon))_\varepsilon$ is a bounded sequence, 
then $(v'_\varepsilon)_\varepsilon$ and $(\nabla_yv_\varepsilon)_\varepsilon$ are bounded sequences in ${\cal H}_n$. 
Furthermore, $v\in H^1(Q)$ and there exists a subsequence of  $(v_\varepsilon)_\varepsilon$, denoted by the same symbol
$(v_\varepsilon)_\varepsilon$, so that $v'_\varepsilon\rightharpoonup  v'$ and $\nabla_yv_\varepsilon\rightharpoonup\nabla_yv$.
\end{Lemma}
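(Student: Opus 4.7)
The plan is to extract a uniform $H^1(Q)$ bound on $(v_\varepsilon)_\varepsilon$ from the energy bound $b_n^\varepsilon(v_\varepsilon) \leq C$, and then conclude by weak compactness combined with the uniqueness of weak limits in $L^2(Q)$.

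The crucial observation is that $v_\varepsilon \in {\cal H}_n$ forces $v_\varepsilon(s,\cdot) \in {\cal W}_{n-1}\cap H^1(S)$ for almost every $s\in I$ (after one uses the bound below to ensure $v_\varepsilon \in H^1(Q)$; a priori it is enough that the slice lies in ${\cal W}_{n-1}$ via the orthogonality definition of ${\cal H}_n$). Applying the slicewise inequality (\ref{minmaxins}) and integrating over $I$ yields
$$\int_Q |\nabla_y v_\varepsilon|^2 \ds\dy \ \geq\ \lambda_n \|v_\varepsilon\|_{L^2(Q)}^2,$$
so the two $\varepsilon^{-2}$-contributions inside $b_n^\varepsilon$ combine into a nonnegative quantity. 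Using also that the remaining integrand is nonnegative, we split the bound $b_n^\varepsilon(v_\varepsilon) \le C$ into
$$\int_Q \bigl|v'_\varepsilon + \langle \nabla_y v_\varepsilon, R y\rangle \alpha'(s)\bigr|^2 \ds\dy \ \leq\ C, \qquad \frac{1}{\varepsilon^2}\Bigl(\int_Q |\nabla_y v_\varepsilon|^2 \ds\dy - \lambda_n \|v_\varepsilon\|_{L^2(Q)}^2\Bigr) \ \leq\ C.$$
Since $v_\varepsilon \wseta v$ in ${\cal H}_n$ implies $\|v_\varepsilon\|_{L^2(Q)} \leq K$, the second inequality gives $\|\nabla_y v_\varepsilon\|_{L^2(Q)}^2 \leq \lambda_n K^2 + C\varepsilon^2$. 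Because $\alpha' \in L^\infty(I)$ and $|Ry|$ is bounded on the bounded set $S$, the triangle inequality applied to the first inequality then bounds $\|v'_\varepsilon\|_{L^2(Q)}$ in terms of $\sqrt{C}$ and $\|\nabla_y v_\varepsilon\|_{L^2(Q)}$, hence uniformly in $\varepsilon$.

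With $(v_\varepsilon)_\varepsilon$ bounded in $H^1(Q)$, reflexivity yields a subsequence converging weakly in $H^1(Q)$ to some $w \in H^1(Q)$. Weak convergence in $H^1(Q)$ implies weak convergence in $L^2(Q)$, so uniqueness of weak $L^2$-limits together with the hypothesis $v_\varepsilon \wseta v$ forces $w = v$. Therefore $v \in H^1(Q)$, and along this subsequence $v'_\varepsilon \wseta v'$ and $\nabla_y v_\varepsilon \wseta \nabla_y v$ in $L^2(Q)$.

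The main obstacle is the first step: each of the terms $\varepsilon^{-2}\int_Q |\nabla_y v_\varepsilon|^2\,\ds\dy$ and $-\varepsilon^{-2}\lambda_n \|v_\varepsilon\|_{L^2(Q)}^2$ can individually blow up, and the entire point of restricting to ${\cal H}_n$ is precisely to enable the cancellation producing the $O(\varepsilon^2)$ control on their sum via (\ref{minmaxins}). Everything that follows is standard weak-compactness machinery.
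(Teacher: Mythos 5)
Your proof is correct and follows essentially the same route as the paper: use the orthogonality of ${\cal H}_n$ together with the slicewise spectral inequality (\ref{minmaxins}) to recognize the $\varepsilon^{-2}$-contribution in $b_n^\varepsilon$ as nonnegative, split the bounded energy into two nonnegative pieces, deduce an $H^1(Q)$ bound, and conclude via reflexivity and uniqueness of weak limits. You spell out the cancellation mechanism a bit more explicitly than the paper, but the underlying argument is identical.
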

\begin{proof}
Since $(v_\varepsilon)_\varepsilon$ and $(b_n^\varepsilon(v_\varepsilon))_\varepsilon$  are bounded sequences, 
there exists a number $K>0$ so that 
\[\limsup\limits_{\varepsilon\rightarrow0}\int_{Q}|v'_\varepsilon+\langle\nabla_yv_\varepsilon,Ry\rangle\alpha'(s)|^2 \ds \dy
\leq \limsup\limits_{\varepsilon\rightarrow0}b_n^\varepsilon(v_\varepsilon) <  K,\]
and
\begin{eqnarray}\label{intcdl}
& & \limsup\limits_{\varepsilon\rightarrow 0}\int_{Q}|\nabla_yv_\varepsilon|^2 \ds\dy \nonumber\\
& = & \limsup\limits_{\varepsilon\rightarrow 0}\left(\int_{Q} \left(|\nabla_y v_\varepsilon|^2-\lambda_n|v_\varepsilon|^2\right) \ds\dy 
+\int_{Q} \lambda_n |v_\varepsilon|^2 \ds\dy\right) \nonumber\\
& \leq & \limsup\limits_{\varepsilon\rightarrow 0} K \varepsilon^2+
\limsup\limits_{\varepsilon\rightarrow 0}\int_{Q} \lambda_n|v_\varepsilon|^2 \ds \dy< K.
\end{eqnarray}

These estimates, and the fact that $\alpha'$ and $R\,y$ are bounded functions, show that
$(v'_\varepsilon)_\varepsilon$ and $(\nabla_y v_\varepsilon)_\varepsilon$ are bounded sequences in $L^2(Q)$.
Therefore, $(v_\varepsilon)_\varepsilon$ is a bounded sequence in $H^1(Q)$. 
Thus, there exists $\psi\in H^1(Q)$ and a subsequence of $(v_\varepsilon)_\varepsilon$,  
also denoted by $(v_\varepsilon)_\varepsilon$, so that $v_\varepsilon\rightharpoonup \psi$ in $H^1(Q)$ 
(recall that this Hilbert space is reflexive). As $v_\varepsilon\rightharpoonup v$ in ${\cal H}_n$, it follows that $v=\psi$,   
$v'_\varepsilon\rightharpoonup v'$, $\nabla_yv_\varepsilon\rightharpoonup \nabla_y v$ in ${\cal H}_n$ and $v\in H^1(Q)$.
\end{proof}

\begin{Lemma}\label{lem2}
If $v_\varepsilon\rightarrow v$ in ${\cal H}_n$ and
there exists the limit $\lim\limits_{\varepsilon\rightarrow 0} b_n^\varepsilon(v_\varepsilon) < +\infty$, then
$v(s,y)=w(s)u_n(y)$ with $w\in H^1(I)$ (i.e., $v \in {\cal K}_n$).
\end{Lemma}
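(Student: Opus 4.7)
My plan is to decompose $v_\varepsilon$ into its $u_n$-Fourier component in the transverse variable plus a remainder that lies in the higher-mode subspace ${\cal W}_n=[u_1,\dots,u_n]^\perp$ slice-by-slice, and then exploit the spectral gap $\lambda_{n+1}-\lambda_n>0$ together with the singular factor $1/\varepsilon^2$ in $b_n^\varepsilon$ to force the remainder to vanish in the limit.

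Since $v_\varepsilon\in \dom b_n^\varepsilon = H^1(Q)\cap{\cal H}_n$, a direct Fubini argument shows that for a.e.\ $s\in I$ the slice $v_\varepsilon(s,\cdot)$ lies in $[u_1,\dots,u_{n-1}]^\perp\cap H^1(S)$. I would then write
\[v_\varepsilon(s,y) = w_\varepsilon(s)\,u_n(y) + r_\varepsilon(s,y),\qquad w_\varepsilon(s):=\int_S v_\varepsilon(s,y)\,\overline{u_n(y)}\,\dy,\]
so that $r_\varepsilon(s,\cdot)\in{\cal W}_n\cap H^1(S)$ for a.e.\ $s$. Using $-\Delta_y u_n = \lambda_n u_n$ with the Neumann condition $\partial_\nu u_n|_{\partial S}=0$, an integration by parts in $y$ kills the cross term $\int_S\nabla_y u_n\cdot\overline{\nabla_y r_\varepsilon}\,\dy$ slice-by-slice; combined with the orthogonality of $u_n$ and $r_\varepsilon(s,\cdot)$ in $L^2(S)$, this gives
\[\int_Q\bigl(|\nabla_y v_\varepsilon|^2-\lambda_n|v_\varepsilon|^2\bigr)\ds\dy \;=\; \int_Q\bigl(|\nabla_y r_\varepsilon|^2-\lambda_n|r_\varepsilon|^2\bigr)\ds\dy \;\geq\; (\lambda_{n+1}-\lambda_n)\,\|r_\varepsilon\|_{L^2(Q)}^2,\]
where the inequality applies (\ref{minmaxins}) to each slice.

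Dropping the non-negative first term in $b_n^\varepsilon(v_\varepsilon)$, this produces $\|r_\varepsilon\|_{L^2(Q)}^2 \leq K\,\varepsilon^2\,b_n^\varepsilon(v_\varepsilon)$, and the boundedness of $b_n^\varepsilon(v_\varepsilon)$ yields $r_\varepsilon\to 0$ strongly in $L^2(Q)$. Since $v_\varepsilon\to v$ in ${\cal H}_n$ by hypothesis, we get $w_\varepsilon u_n\to v$ in $L^2(Q)$, which forces $v(s,y)=w(s)u_n(y)$ with $w(s)=\la v(s,\cdot),u_n\ra_{L^2(S)}\in L^2(I)$. To upgrade $w$ to $H^1(I)$, I would invoke Lemma \ref{lem1}: strong convergence implies weak convergence in ${\cal H}_n$, the hypothesis gives boundedness of $b_n^\varepsilon(v_\varepsilon)$, hence $v\in H^1(Q)$; since $u_n$ is a fixed nonzero smooth function on $S$, writing $v=wu_n\in H^1(Q)$ is equivalent to $w\in H^1(I)$.

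The only delicate point is making the slice-by-slice decomposition rigorous and verifying the vanishing of the transverse cross term; both reduce to the $H^1(Q)$ regularity of $v_\varepsilon$ (automatic from $b_n^\varepsilon(v_\varepsilon)<+\infty$) and the Neumann condition on $u_n$. Everything else is the standard spectral-gap trick, with (\ref{minmaxins}) playing the role of absorbing the diverging $1/\varepsilon^2$ factor.
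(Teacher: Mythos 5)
Your proof is correct, and it takes a genuinely different route from the paper's. The paper's proof of Lemma \ref{lem2} works at the level of the limit $v$: Lemma \ref{lem1} furnishes $\nabla_y v_\varepsilon \rightharpoonup \nabla_y v$, and weak lower semicontinuity of the $L^2$-norm combined with the bound (\ref{intcdl}) and the strong convergence of $v_\varepsilon$ yield $\int_Q |\nabla_y v|^2 \,\ds\dy \leq \lambda_n \int_Q |v|^2 \,\ds\dy$; the slice function $f_n(s)=\int_S\bigl(|\nabla_y v(s,\cdot)|^2-\lambda_n|v(s,\cdot)|^2\bigr)\dy$ is then nonnegative pointwise by (\ref{minmaxins}) and has nonpositive integral, hence vanishes a.e., and simplicity of $\lambda_n$ identifies $v(s,\cdot)$ as a multiple of $u_n$. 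You instead decompose the \emph{approximating sequence} itself, $v_\varepsilon = w_\varepsilon u_n + r_\varepsilon$ with $r_\varepsilon(s,\cdot)\in{\cal W}_n$, kill the cross terms via the Neumann integration by parts and $L^2(S)$-orthogonality, apply (\ref{minmaxins}) slice-by-slice to $r_\varepsilon$, and read off the quantitative bound $\|r_\varepsilon\|_{L^2(Q)}^2 \leq \varepsilon^2 (\lambda_{n+1}-\lambda_n)^{-1} b_n^\varepsilon(v_\varepsilon) \to 0$. This is essentially the spectral-projection estimate the paper carries out in the second half of the proof of Proposition \ref{propweaky}, transplanted here to prove Lemma \ref{lem2} directly. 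Your route is more quantitative (it gives a rate $O(\varepsilon)$ for the non-$u_n$ content of $v_\varepsilon$) and avoids invoking the equality case of the Rayleigh quotient to recognize an eigenfunction; the paper's route is shorter once Lemma \ref{lem1} is in hand, at the cost of leaning explicitly on uniqueness of the $\lambda_n$-eigenfunction. If you write this up, make explicit that $w_\varepsilon \in H^1(I)$ (e.g.\ $\partial_s w_\varepsilon(s) = \int_S \partial_s v_\varepsilon(s,\cdot)\,\overline{u_n}\,\dy \in L^2(I)$), so that both pieces of the decomposition lie in $H^1(Q)$ and the slice-wise use of (\ref{minmaxins}) is justified by Fubini.
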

\begin{proof}
By Lemma \ref{lem1}, passing to a subsequence if necessary,  $\nabla_yv_\varepsilon\rightharpoonup \nabla_y v$  in $L^2(Q)$.  
By weak lower semi-continuity of the $L^2$-norm, 
inequality (\ref{intcdl}) and from the strong convergence of 
$(v_\varepsilon)_\varepsilon $, we have
\[\int_{Q}|\nabla_y v|^2 \ds\dy \leq \liminf\limits_{\varepsilon\rightarrow 0}\int_{Q}|
\nabla_yv_\varepsilon|^2 \ds \dy
\leq \limsup\limits_{\varepsilon\rightarrow 0}\lambda_n\int_{Q}|v_\varepsilon|^2 \ds\dy
 = \lambda_n\int_{Q}|v|^2 \ds\dy.\]

Now, define the function
$f_n(s):= \int_S\left(|\nabla_yv(s,y)|^2-\lambda_n|v(s,y)|^2\right) \dy.$ 
The latter inequalities show that $f_n(s) \leq 0$. However,
(\ref{minmaxins}) ensures that $f_n(s)\geq 0$. 
Then,  $f_n=0$ a.e..
We conclude that $v(s,\cdot)\in {\cal W}_{n-1}\cap H^1(S)$, and  $v(s,\cdot)$ is an eigenfunction of the operator  
$-\Delta^N_S|_{{\cal W}_{n-1}}$ whose eigenvalue associated is $\lambda_n$. 
As $\lambda_n$ is simple, $v(s,\cdot)$ is proportional to $u_n$. 
Thus, we can write $v(s,y)=w(s)u_n(y)$ with $w\in H^1(I)$, since $v\in H^1(Q)$. 
\end{proof}

\begin{Proposition}\label{propstrong}
For each $n \in \mathbb N$, the sequence of quadratic forms $b_n^\varepsilon$ strongly 
$\Gamma$-converges to $b_n$, as $\varepsilon \to 0$.
\end{Proposition}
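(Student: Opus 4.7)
The plan is to verify the two defining conditions of strong $\Gamma$-convergence recalled in Appendix \ref{app1}: (i) a $\Gamma$-liminf inequality along every strongly convergent sequence in $\mathcal{H}_n$, and (ii) the existence of a strongly convergent recovery sequence for each target $v \in \mathcal{H}_n$. Lemmas \ref{lem1} and \ref{lem2} above are tailored to deliver (i), while a pleasant feature of the Neumann setting is that (ii) can be handled with the trivial constant sequence.

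For (i), I would fix $v \in \mathcal{H}_n$ and $v_\varepsilon \to v$ strongly in $\mathcal{H}_n$. If $\ell:=\liminf_{\varepsilon\to 0} b_n^\varepsilon(v_\varepsilon) = +\infty$ there is nothing to prove, so I pass to a subsequence along which $b_n^\varepsilon(v_\varepsilon) \to \ell < +\infty$. Lemma \ref{lem1} then gives $v \in H^1(Q)$ together with (after a further subsequence) $v'_\varepsilon \rightharpoonup v'$ and $\nabla_y v_\varepsilon \rightharpoonup \nabla_y v$ in $L^2(Q)$. Lemma \ref{lem2} forces $v = w u_n$ with $w \in H^1(I)$, so $v \in \mathcal{K}_n$ and $b_n(v)$ is finite. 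The structural fact that $v_\varepsilon \in \mathcal{H}_n$ implies $v_\varepsilon(s,\cdot) \in \mathcal{W}_{n-1}$ for a.e.\ $s$, so the variational characterization (\ref{minmaxins}) yields the non-negativity
\[\frac{1}{\varepsilon^2}\int_Q \bigl(|\nabla_y v_\varepsilon|^2 - \lambda_n|v_\varepsilon|^2\bigr)\, \ds\, \dy \geq 0.\]
Discarding this term and combining the weak convergences above (multiplication by the bounded coefficients $\alpha'(s)$ and $Ry$ preserves weak convergence) with the weak lower semi-continuity of the $L^2$-norm of $v'_\varepsilon + \langle \nabla_y v_\varepsilon, Ry\rangle \alpha'(s)$ produces the desired $\liminf b_n^\varepsilon(v_\varepsilon) \geq b_n(v)$.

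For (ii), if $v \notin \mathcal{K}_n$ then $b_n(v) = +\infty$ and any sequence works; otherwise I write $v = w u_n$ with $w \in H^1(I)$ and use the constant recovery sequence $v_\varepsilon := v \in H^1(Q) \cap \mathcal{H}_n$. The decisive elementary identity is
\[\int_S \bigl(|\nabla_y u_n|^2 - \lambda_n |u_n|^2 \bigr)\, \dy = 0,\]
obtained from Green's formula together with the Neumann boundary condition $\partial_\nu u_n = 0$, the eigenvalue equation $-\Delta^N_S u_n = \lambda_n u_n$, and the normalization $\|u_n\|_{L^2(S)} = 1$. This makes the $1/\varepsilon^2$ term in $b_n^\varepsilon(v)$ vanish identically, so $b_n^\varepsilon(v) = b_n(v)$ for every $\varepsilon > 0$, which is even stronger than the required $\limsup$ inequality. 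The only genuine obstacle in the whole argument is the first half of (i), namely showing that any weak limit $v$ of a low-energy sequence must take the product form $wu_n$; this is precisely what Lemmas \ref{lem1}--\ref{lem2} encapsulate, and once they are in hand everything else reduces to standard weak lower semi-continuity of a positive quadratic form.
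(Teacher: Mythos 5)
Your proof is correct and follows essentially the same route as the paper: for the liminf inequality you pass to a subsequence, invoke Lemmas \ref{lem1}--\ref{lem2} to identify the limit as an element of $\mathcal{K}_n$, and conclude by weak lower semi-continuity of the first term of $b_n^\varepsilon$ after discarding the nonnegative $\varepsilon^{-2}$ term; for the recovery sequence you take the constant sequence $v_\varepsilon := v$ when $v = w u_n$, which works because $\int_S\bigl(|\nabla_y u_n|^2 - \lambda_n|u_n|^2\bigr)\,\dy = 0$ kills the divergent term exactly. Your only shortcut is the remark that ``any sequence works'' when $v \notin \mathcal{K}_n$: the paper rechecks this via Lemmas \ref{lem1}--\ref{lem2} to show $\liminf b_n^\varepsilon(v_\varepsilon) = +\infty$, but your shortcut is also legitimate since once item (i) is established, (ii) becomes automatic for any $v$ with $b_n(v) = +\infty$. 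Both arguments are sound; yours is simply more economical in that one spot, and slightly more explicit than the paper in naming the identity that makes the constant recovery sequence admissible.
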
	

\begin{proof}
We have to prove the itens $(i)$ and $(ii)$ 
according to the definition of strong $\Gamma$-convergence in Appendix \ref{app1}.
 
Let $v\in {\cal H}_n$ and $v_\varepsilon\rightarrow v$ in ${\cal H}_n$. 
If $\liminf\limits_{\varepsilon\rightarrow0}b_n^\varepsilon(v_\varepsilon)=+\infty$, 
then $b_n(v)\leq\liminf\limits_{\varepsilon\rightarrow0}b_n^\varepsilon(v_\varepsilon).$ 
Now, assume that $\liminf\limits_{\varepsilon\rightarrow0}b_n^\varepsilon(v_\varepsilon)<+\infty$. 
Passing to a subsequence if necessary, 
we can suppose that $\liminf\limits_{\varepsilon\rightarrow0}b_n^\varepsilon(v_\varepsilon)=
\lim\limits_{\varepsilon\rightarrow0}b_n^\varepsilon(v_\varepsilon)<+\infty.$

Lemma \ref{lem1} ensures that $v'_\varepsilon\rightharpoonup v'$, 
$\nabla_yv_\varepsilon\rightharpoonup \nabla_yv$ in $L^2(Q)$, and $v \in H^1(Q)$. 
Since $\alpha'$ is a bounded function, 
\[v'_\varepsilon+\langle\nabla_yv_\varepsilon,Ry\rangle\alpha'\rightharpoonup v'+\langle\nabla_yv,Ry\rangle\alpha'\]
in $L^2(Q)$. 
Then, 
\begin{eqnarray*}
\int_{Q}|v'+\langle\nabla_yv,Ry\rangle\alpha'(s)|^2 \ds\dy & \leq& \liminf\limits_{\varepsilon\rightarrow0}\int_{Q}|v'_\varepsilon+\langle\nabla_yv_\varepsilon,Ry\rangle\alpha'(s)|^2 \ds\dy\\
&\leq& \liminf\limits_{\varepsilon\rightarrow0}b_n^\varepsilon(v_\varepsilon).
\end{eqnarray*}
	
By Lemma \ref{lem2}, we can write $v(s,y)=w(s)u_n(y)$ with $w\in H^1(I)$. Thus,	
\[b_n(w) = b_n(v)\leq \liminf\limits_{\varepsilon\rightarrow0}b_n^\varepsilon(v_\varepsilon),\] 
and item $(i)$ is proven.

To prove $(ii)$, we are going to show that for each $v \in {\cal H}_n$ there exists a sequence 
$(v_\varepsilon)_\varepsilon$ in ${\cal H}_n$ so 
that $v_\varepsilon\rightarrow v$  in ${\cal H}_n$ and $\lim\limits_{\varepsilon\rightarrow 0}b^\varepsilon_n(v_\varepsilon)=b_n(v)$. 
At first, consider the particular  case  $v(s,y)=w(s)u_n(y)$ with $w\in H^1(I)$. 
Take $v_\varepsilon:=v$, for all $\varepsilon> 0$. Note that $b_n^\varepsilon(v)=b_n(w)$, for all $\varepsilon > 0$, and 
\[\lim\limits_{\varepsilon\rightarrow0}b_n^\varepsilon(v_\varepsilon)=b_n(v).\]
	
On the other hand, if $v\in {\cal H}_n\setminus\{w(s)u_n(y):w\in H^1(I)\}$, one has
$b_n(v)=+\infty$. Let $(v_\varepsilon)_\varepsilon$ be an arbitrary sequence so that $v_\varepsilon\rightarrow v$ in ${\cal H}_n$. 
In this case, $\dis\liminf_{\varepsilon\rightarrow0}b_n^\varepsilon(v_\varepsilon)=+\infty$. 
In fact, if we suppose that $\liminf\limits_{\varepsilon\rightarrow0}b_n^\varepsilon(v_\varepsilon)<+\infty$, 
by Lemmas \ref{lem1} and \ref{lem2} we should have $v=wu_n$, with $w\in H^1(I)$, 
but this is not true. 
Therefore, $+\infty = \liminf\limits_{\varepsilon\rightarrow0}b_n^\varepsilon(v_\varepsilon)=
\lim\limits_{\varepsilon\rightarrow0}b_n^\varepsilon(v_\varepsilon) =b_n(v)$.
Then, item $(ii)$ is satisfied.
\end{proof}

\begin{Proposition}\label{propweaky}
For each $n \in \mathbb N$, the sequence of quadratic forms $b_n^\varepsilon$ weakly 
$\Gamma$-converges to $b_n$, as $\varepsilon \to 0$.
\end{Proposition}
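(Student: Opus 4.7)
The strategy is to mirror the proof of Proposition~\ref{propstrong}, verifying separately the weak-liminf inequality (item~(i)) and the existence of a recovery sequence (item~(ii)); however the first item requires an additional ingredient. Item~(ii) is essentially free: the constant recovery sequence $v_\varepsilon \equiv v$ constructed for the strong $\Gamma$-convergence still works here, since strong convergence in ${\cal H}_n$ implies weak convergence. So all the work goes into item~(i).

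For (i), let $v \in {\cal H}_n$ and $v_\varepsilon \rightharpoonup v$ weakly in ${\cal H}_n$. If $\liminf_{\varepsilon\to 0} b_n^\varepsilon(v_\varepsilon) = +\infty$ the claim is trivial, so I may assume the liminf is finite and pass to a realizing subsequence. Lemma~\ref{lem1} then yields $v \in H^1(Q)$ together with $v_\varepsilon' \rightharpoonup v'$ and $\nabla_y v_\varepsilon \rightharpoonup \nabla_y v$ in $L^2(Q)$.

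The main obstacle is that Lemma~\ref{lem2}, which identifies $v$ with an element of ${\cal K}_n$, used strong convergence of $v_\varepsilon$ to bound $\int_Q|\nabla_y v|^2$ by $\lambda_n \int_Q|v|^2$; under only weak convergence one has $\|v\|^2 \leq \liminf \|v_\varepsilon\|^2$, which is in the wrong direction. I bypass this with a Fourier-type argument in $y$. Since $v_\varepsilon(s,\cdot) \in {\cal W}_{n-1}\cap H^1(S)$ for a.e.\ $s$, expand $v_\varepsilon(s,y) = \sum_{j \geq n} c_j^\varepsilon(s) u_j(y)$; Parseval in $y$ together with the spectral identity $\int_S |\nabla_y v_\varepsilon(s,\cdot)|^2\dy = \sum_j \lambda_j |c_j^\varepsilon(s)|^2$ gives
\[
\int_Q \bigl(|\nabla_y v_\varepsilon|^2 - \lambda_n |v_\varepsilon|^2\bigr)\,\ds\,\dy
= \sum_{j>n} (\lambda_j - \lambda_n)\,\|c_j^\varepsilon\|_{L^2(I)}^2,
\]
and the right-hand side is $O(\varepsilon^2)$ by the same estimate used in~(\ref{intcdl}). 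Since $\lambda_j > \lambda_n$ for every $j>n$, this forces $c_j^\varepsilon \to 0$ strongly in $L^2(I)$ for each such $j$. Testing the weak convergence $v_\varepsilon \rightharpoonup v$ against products $\varphi(s) u_j(y)$ with $\varphi \in L^2(I)$ shows $c_j^\varepsilon \rightharpoonup c_j$ in $L^2(I)$, where $c_j(s) = \langle v(s,\cdot), u_j \rangle_{L^2(S)}$; uniqueness of weak limits then yields $c_j \equiv 0$ for all $j>n$, so $v(s,y) = c_n(s) u_n(y)$ with $c_n \in H^1(I)$, i.e., $v \in {\cal K}_n$.

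Once $v \in {\cal K}_n$ is in hand, the rest mirrors Proposition~\ref{propstrong}: boundedness of $\alpha'$ and $Ry$ promotes $v_\varepsilon' \rightharpoonup v'$ and $\nabla_y v_\varepsilon \rightharpoonup \nabla_y v$ to the weak convergence $v_\varepsilon' + \langle \nabla_y v_\varepsilon, Ry\rangle \alpha' \rightharpoonup v' + \langle \nabla_y v, Ry\rangle \alpha'$ in $L^2(Q)$, and weak lower semicontinuity of the $L^2$-norm then gives $b_n(v) \leq \liminf_{\varepsilon \to 0} b_n^\varepsilon(v_\varepsilon)$, completing item~(i) and hence the proposition.
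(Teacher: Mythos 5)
Your proof is correct and rests on the same two ingredients as the paper's: Lemma~\ref{lem1} (which the paper deliberately states for weak convergence precisely so it can be reused here) to pass the derivatives to the weak limit and obtain $v\in H^1(Q)$, and the transverse spectral-gap estimate $\int_Q\bigl(|\nabla_y v_\varepsilon|^2-\lambda_n|v_\varepsilon|^2\bigr)\,\ds\,\dy\leq K\varepsilon^2$ coming from the boundedness of $b_n^\varepsilon(v_\varepsilon)$. The organization differs. The paper splits into the case $v\in\mathcal{K}_n$ (weak lower semicontinuity) versus $v\notin\mathcal{K}_n$, and for the latter asserts $\|P_{n+1}v\|>0$ and combines this with $b_n^\varepsilon(v_\varepsilon)\geq(\lambda_{n+1}-\lambda_n)\varepsilon^{-2}\|P_{n+1}v_\varepsilon\|^2$ to force the liminf to $+\infty$. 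You instead deduce $v\in\mathcal{K}_n$ directly whenever the liminf is finite, by expanding $v_\varepsilon$ in the transverse eigenbasis and showing each mode $c_j^\varepsilon$ with $j>n$ tends to zero strongly, hence weakly, so the weak limit retains only the $u_n$ component. Since $\|P_{n+1}v_\varepsilon\|^2=\sum_{j>n}\|c_j^\varepsilon\|_{L^2(I)}^2$, your Fourier-mode computation is really an unpacked version of the paper's projection estimate, so the two routes are equivalent in substance. Your version is, however, slightly tighter: the paper's claim $\|P_{n+1}v\|>0$ is not automatic when $v\in\mathcal{L}_n\setminus\mathcal{K}_n$; that edge case has to be excluded separately by invoking $v\in H^1(Q)$ from Lemma~\ref{lem1}, which you do explicitly and the paper leaves implicit. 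One small caveat on your treatment of item~(ii): when $v\notin\mathcal{K}_n$ (so $b_n(v)=+\infty$) the constant sequence $v_\varepsilon\equiv v$ still works, but the reason is not purely ``strong implies weak'' — one must also note that either $v\notin H^1(Q)$, giving $b_n^\varepsilon(v)=+\infty$, or $v\in H^1(Q)\setminus\mathcal{L}_n$, in which case the $\varepsilon^{-2}$ term diverges; a sentence to that effect would complete the argument.
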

\begin{proof}
At first, we are going to show the condition $(i)$ of the definition of weak $\Gamma$-convergence, i.e.,
$b_n(v)\leq \liminf\limits_{\varepsilon\rightarrow0}b_n(v_\varepsilon)$, for all sequence 
$v_\varepsilon\rightharpoonup v$ in ${\cal H}_n$.
So, assume the weak convergence  $v_\varepsilon\rightharpoonup v$.
Initially, consider the case where $(v_\varepsilon)_\varepsilon$ does not belong to ${\cal H}_n \cap H^1(Q)$.
Then,  $b_n^\varepsilon(v_\varepsilon) = + \infty$, for all $\varepsilon > 0$,
and the inequality is proven.
Now, assume that $(v_\varepsilon)_\varepsilon \subset {\cal H}_n \cap H^1(Q)$.
Suppose  
that  $v=wu_n$, with $w \in H^1(I)$.
By definition, $b_n(v)<+\infty$.
If $\liminf\limits_{\varepsilon\rightarrow 0}b^\varepsilon_n(v_\varepsilon)=+\infty$ the inequality is proven.
Now, suppose that $\liminf\limits_{\varepsilon\rightarrow 0}b_n^\varepsilon(v_\varepsilon)<+\infty$. Passing to a subsequence
if necessary, we can suppose that
$\liminf\limits_{\varepsilon\rightarrow0}b_n^\varepsilon(v_\varepsilon)=\lim\limits_{\varepsilon\rightarrow0}b_n^\varepsilon(v_\varepsilon)<+\infty.$	
As in the proof of Proposition \ref{propstrong},
\begin{eqnarray*}
\lim\limits_{\varepsilon\rightarrow0}b_n^\varepsilon(v_\varepsilon)  
& \geq & \int_{Q}|v'+\langle\nabla_yv,Ry\rangle\alpha'(s)|^2 \ds \dy \\
& = & b_n(w).
\end{eqnarray*}

Now, suppose that $v$ does not belong
to the subspace $\{w u_n: w\in H^1(I) \}$.
We are going to show that necessarily 
$\displaystyle\liminf_{\varepsilon\rightarrow0}b_n^\varepsilon(v_\varepsilon)=+\infty$. 
In fact, let $P_{n+1}$ be the orthogonal projection onto ${\cal H}_{n+1}$.
We have $\|P_{n+1}v\|>0$.
Since $v_\varepsilon \rightharpoonup v$  in ${\cal H}_n \cap H^1(Q)$,
$P_{n+1}v_\varepsilon\rightharpoonup P_{n+1}v$ and
\begin{equation}\label{projinorth}
\liminf\limits_{\varepsilon\rightarrow0}\|P_{n+1}v_\varepsilon\|\geq \|P_{n+1}v\|>0.
\end{equation}
	
Note that
\begin{equation}\label{infheldiv}
b_n^\varepsilon(v_\varepsilon)\geq\frac{1}{\varepsilon^2}\int_{Q}
\left(|\nabla_yv_\varepsilon|^2-\lambda_n|v_\varepsilon|^2\right)\ds\dy.
\end{equation}
The strategy is to estimate the term on the right side of this inequality.
	
For $\psi\in H^1(S)\cap {\cal W}_{n-1}$,
denote by $\psi^n$ the component of $\psi$ in $[u_n]$ and by $Q_{n+1}$ the orthogonal projection onto 
${\cal W}_n$ in $H^1(S)$. Thus,
\begin{eqnarray*}
\frac{1}{\varepsilon^2}\int_{Q} \left(|\nabla_yv_\varepsilon|^2 -\lambda_n|v_\varepsilon|^2\right) \ds\dy
& = &
\frac{1}{\varepsilon^2}\int_{I}\left(\|\nabla_yv_\varepsilon(s,\cdot)\|^2_{L^2(S)}-\lambda_n\|v_\varepsilon(s,\cdot)\|^2_{L^2(S)}\right) \ds\\
& = &  
\frac{1}{\varepsilon^2}\int_{I}\left(\|v_\varepsilon(s,\cdot)\|^2_{H^1(S)}-(\lambda_n+1)\|v_\varepsilon(s,\cdot)\|^2_{L^2(S)}\right) \ds\\
& = &  
\frac{1}{\varepsilon^2}\int_{I}\left(\|Q_{n+1}v_\varepsilon(s,\cdot)\|^2_{H^1(S)} +\|v^n_\varepsilon(s,\cdot)\|^2_{H^1(S)}\right. \\
& - &
\left.(\lambda_n+1)\|v_\varepsilon(s,\cdot)\|^2_{L^2(S)}\right)\ds\\
& = &
\frac{1}{\varepsilon^2}\int_{I}\left(\|\nabla_yQ_{n+1}v_\varepsilon(s,\cdot)\|^2_{L^2(S)} +
\|Q_{n+1}v_\varepsilon(s,\cdot)\|^2_{L^2(S)}\right.\\
& + & 
\|\nabla_nv^n_\varepsilon(s,\cdot)\|^2_{L^2(S)}+\|v^n_\varepsilon(s,\cdot)\|^2_{L^2(S)}\\
& - &
\left.(\lambda_n+1)\|v_\varepsilon(s,\cdot)\|^2_{L^2(S)}\right) \ds\\
&\geq&  
\frac{1}{\varepsilon^2}\int_{I}\left(\lambda_{n+1}\|Q_{n+1}v_\varepsilon(s,\cdot)\|^2_{L^2(S)} +
\lambda_n\|v^n_\varepsilon(s,\cdot)\|^2_{H^1(S)}\right.\\
& - &
\left.\lambda_n\|v_\varepsilon(s,\cdot)\|^2_{L^2(S)}\right)\ds\\
& = &
\frac{1}{\varepsilon^2}\int_{I} (\lambda_{n+1}-\lambda_n)|Q_{n+1} v_\varepsilon|^2\ds\dy\\
& = & 
\frac{(\lambda_{n+1}-\lambda_n)}{\varepsilon^2}\|P_{n+1}v_\varepsilon\|^2\\
& \geq &
\frac{(\lambda_{n+1}-\lambda_n)}{\varepsilon^2}\|P_{n+1}v\|^2.
\end{eqnarray*}
This estimate, (\ref{projinorth}), (\ref{infheldiv}) and the fact that
$\lambda_{n+1} > \lambda_n$, imply that
$\liminf\limits_{\varepsilon\rightarrow0}b_n^\varepsilon(v_\varepsilon) = + \infty$.

Finally, the condition $(ii)$ of the definition of weak 
$\Gamma$-convergence can be proven in a similar way to the proof of 
Proposition \ref{propstrong}.
\end{proof}

\vspace{0.3cm}
\noindent
{\bf Proof of Theorema \ref{maintheorem}}:
$(A)$ This item follows by
Propositions \ref{propstrong} and \ref{propweaky} of this section and 
Proposition \ref{propappfirst} in Appendix \ref{app1}.

\vspace{0.3cm}
\noindent
$(B)$	We have to verify the itens $a), b),$ and $c)$ of Propostion \ref{propappsecond} in Appendix \ref{app1}.
Item $a)$ follows by Propositions \ref{propstrong} and \ref{propweaky}. It is known that
the operator $T_n$ has compact resolvent.
Thus, $b)$ is satisfied.
It remains to ensure $c)$.
Consider the subspace $\mathcal{K} := \{\mathcal{K}_1\oplus \dots \oplus\mathcal{K}_{n-1}\}^\perp$.
By  Rellich-Kondrachov Theorem, $\mathcal{K}$ is compactly embedded in ${\cal H}_n$. 
Thus, if $(v_\varepsilon)_\varepsilon$ is a bounded sequence in ${\cal H}_n$ and $(b^\varepsilon_n(v_\varepsilon))_\varepsilon$ is also bounded, a similar proof to the Lemma \ref{lem1}  shows that $(v_\varepsilon)_\varepsilon$ is a bounded sequence in $\mathcal{K}$.
So, item $c)$ is satisfied. 
By Proposition \ref{propappsecond} in Appendix \ref{app1},  
$T^\varepsilon_n$ converges in the norm resolvent sense to $T_n$ in ${\cal L}_n$.
By Corollary 2.3 in \cite{gohberg}, we have the asymptotic behavior of the eigenvalues given by (\ref{beasymp}).


\section{Spectral properties in the case of periodic waveguide}\label{specprsecsub}

Consider $\Omega_\varepsilon^\alpha$ as in the Introduction in the particular case where $I=\mathbb R$
and
$\alpha: \mathbb R \to \mathbb R$ is a $C^2$ and periodic function, i.e., there exists $L>0$ so that
$\alpha(s+L) = \alpha(s)$, for all $s \in \mathbb R$.
In this context, the goal of this section is to find spectral information about the spectrum
of $-\Delta_n^\varepsilon$, for each $n \in \mathbb N$.
Namely, we study the continuous absolutely spectrum  $\sigma_{ac}(-\Delta_n^\varepsilon)$
and
the existence and location of band gaps in $\sigma(-\Delta_n^\varepsilon)$.

\subsection{Preliminary results}\label{preressub}
		
Due to the periodic characteristics of $-\Delta_n^\varepsilon$,
to prove Theorems \ref{theoremperiodiccase} and \ref{gap-exists}, 
we are going to use the Floquet-Bloch reduction under the Brillouin zone $\mathcal{C}=[-\pi/L,\pi/L)$.
More precisely, define $Q_L:=(0,L)\times S$, ${\cal L}_n^L:=\{w(s)u_n(y): w \in L^2(0,L)\}$, $n \in \mathbb N$,
\[{\cal H}^L_n := \left\{
\begin{array}{ll}
L^2(Q_L), & n=1,\\
({\cal L}^L_1 \oplus {\cal L}^L_2 \oplus \cdots \oplus {\cal L}^L_{n-1})^\perp, & n=2,3,\cdots.
\end{array}\right.\] 
Consider the family of quadratic forms acting in ${\cal H}^L_n$:
\begin{equation}\label{fibersquaform}
\hat{b}^\varepsilon_n(\theta)(\varphi) = \int_{Q_L} \left(
\left| \varphi' +i\theta \varphi+ \langle \nabla_y \varphi , R y \rangle \alpha'(s) \right|^2
+ \frac{|\nabla_y \varphi |^2}{\varepsilon^2}
\right) \ds \dy, \quad \theta\in\mathcal{C},
\end{equation}
$\dom \hat{b}_n^\varepsilon(\theta) =\{\varphi \in H^1(Q_L)\cap {\cal H}^L_n;
\varphi (0,\cdot)=\varphi (L,\cdot) \text{\,in\,}L^2(S)\}$. 	
Denote by $-\Delta_n^\varepsilon(\theta)$ the self-adjoint operator associated with $\hat{b}^\varepsilon_n(\theta)$.

\begin{Lemma}\label{analytic}
For each $n \in \mathbb N$,
$\{-\Delta_n^\varepsilon(\theta), \theta\in \mathcal{C}\} $ is an analytic family of type $(B)$. 
\end{Lemma}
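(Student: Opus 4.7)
The plan is to verify the three conditions defining an analytic family of type (B) in Kato's sense: (i) the form domain is independent of $\theta$; (ii) for each fixed pair $\varphi,\psi$ in the common form domain, the map $\theta \mapsto \hat{b}_n^\varepsilon(\theta)(\varphi,\psi)$ extends holomorphically to a complex neighborhood of $\mathcal{C}$; and (iii) each $\hat{b}_n^\varepsilon(\theta)$ is closed and sectorial on that common domain.

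Condition (i) is immediate from the definition: the set $D := \{\varphi \in H^1(Q_L) \cap \mathcal{H}_n^L : \varphi(0,\cdot) = \varphi(L,\cdot) \text{ in } L^2(S)\}$ is a closed subspace of $H^1(Q_L)$ (the periodicity constraint and the spectral orthogonality condition are both $\theta$-free and closed), and therefore a Hilbert space under the $H^1$-inner product.

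For condition (ii), I would extend the quadratic form in (\ref{fibersquaform}) to complex $\theta$ via the sesquilinear form
\begin{equation*}
\hat{b}_n^\varepsilon(\theta)(\varphi,\psi) := \langle \varphi' + i\theta\varphi + \alpha'(s)\langle \nabla_y\varphi, Ry\rangle, \; \psi' + i\bar{\theta}\psi + \alpha'(s)\langle \nabla_y\psi, Ry\rangle\rangle_{L^2(Q_L)} + \tfrac{1}{\varepsilon^2}\langle \nabla_y\varphi, \nabla_y\psi\rangle_{L^2(Q_L)},
\end{equation*}
which coincides with (\ref{fibersquaform}) when $\varphi=\psi$ and $\theta\in\mathbb{R}$. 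Expanding the inner product and using conjugate-linearity in the second entry, every factor $i\bar\theta$ becomes $-i\theta$; the expression is then a polynomial of degree at most two in $\theta$ alone, with no $\bar\theta$, and hence is entire in $\theta$ for each fixed $(\varphi,\psi)\in D\times D$.

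For condition (iii), I would exploit that $\hat{b}_n^\varepsilon(0)$ is non-negative and, together with $\|\cdot\|_{L^2(Q_L)}^2$, dominates the full $H^1(Q_L)$-norm on $D$: indeed the term $\frac{1}{\varepsilon^2}\|\nabla_y\varphi\|^2$ controls $\nabla_y\varphi$, and the term $\|\varphi' + \alpha'(s)\langle\nabla_y\varphi,Ry\rangle\|^2$ together with $|\langle\nabla_y\varphi,Ry\rangle| \leq C|\nabla_y\varphi|$ (since $S$ is bounded and $\alpha'\in L^\infty$) controls $\varphi'$. The remaining terms in the expansion from Step~(ii) are linear or quadratic in $\theta$ and bounded by $\|\varphi'\|\,\|\varphi\|$, $\|\nabla_y\varphi\|\,\|\varphi\|$, or $\|\varphi\|^2$; by Cauchy--Schwarz and Young's inequality they are form-bounded by $\hat{b}_n^\varepsilon(0)$ with arbitrarily small relative bound, at the cost of an additive constant depending on $|\theta|$. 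Hence, for each compact set $K\subset\mathbb{C}$ there exists $M_K>0$ such that $\hat{b}_n^\varepsilon(\theta) + M_K$ is coercive on $D$ with numerical range lying in a fixed sector of half-angle strictly less than $\pi/2$, uniformly for $\theta\in K$. This yields sectoriality and closedness simultaneously. The main obstacle is this last step: algebraically the estimates are standard, but care is needed to track the cross-terms produced by the combination of the rotation $R$, the twist $\alpha'$, and the complex parameter $\theta$, and to verify that the resulting sector angle can be chosen uniformly on compacts.
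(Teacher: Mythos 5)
Your proposal is correct and rests on the same central estimate as the paper: the decomposition $\hat{b}_n^\varepsilon(\theta)=\hat{b}_n^\varepsilon(0)+(\text{perturbation polynomial in }\theta)$ followed by the observation, via Cauchy--Schwarz and Young, that the perturbation is $\hat{b}_n^\varepsilon(0)$-form-bounded with arbitrarily small relative bound. The difference is one of scaffolding: the paper feeds this estimate into Kato's Theorem VII.4.8, which packages the holomorphy and sectoriality verifications once the zero-relative-bound property is in hand, whereas you unfold the definition of a type~(a) family and check its three clauses directly, including an explicit complex-analytic continuation of the sesquilinear form in $\theta$ via the $i\bar\theta$ device. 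Both routes work; yours is more self-contained, the paper's is shorter. One small stylistic point: you need not split $\|\varphi'\|$ off from $\|\alpha'\langle\nabla_y\varphi,Ry\rangle\|$ when estimating the linear-in-$\theta$ terms --- after expansion those terms regroup exactly into $\langle\varphi'+\alpha'\langle\nabla_y\varphi,Ry\rangle,\,\cdot\,\rangle$, whose $L^2$-norm is the first summand of $\hat{b}_n^\varepsilon(0)(\varphi)$, which makes the form-bound immediate without invoking any control of $\|\varphi'\|$ alone. This is precisely what the paper's one-line estimate does, and it avoids any appearance of needing $\|\varphi'\|\le C\,\hat{b}_n^\varepsilon(0)(\varphi)^{1/2}$ as an intermediate step.
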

\begin{proof} 
At first, note that
$\dom \hat{b}^\varepsilon_n(\theta)$ 
does not depend on $\theta$. 
For each $\theta \in \mathcal{C}$, write $\hat{b}^\varepsilon_n(\theta)= \hat{b}^\varepsilon_n(0)+ c^\varepsilon_n(\theta)$,  where, 
for $\varphi \in \dom \hat{b}^\varepsilon_n(0)$,
\begin{eqnarray*}
c^\varepsilon_n(\theta)(\varphi)&:=& \hat{b}^\varepsilon_n(\theta)(\varphi)-\hat{b}^\varepsilon_n(0)(\varphi)\\
&=& 2 \, \Ree \left(\int_{Q_L}\overline{\left(\varphi'+\langle\nabla_y \varphi, R_y\rangle\alpha'(s)\right)}
(i\theta \varphi) \ds\dy \right) + \theta^2\int_{Q_L}|\varphi|^2 \ds\dy.
\end{eqnarray*} 

We affirm that $c^\varepsilon_n(\theta)$ is $\hat{b}^\varepsilon_n(0)$-bounded with zero relative bound. In fact, 
given $\delta> 0$, 
\begin{eqnarray*}
|c^\varepsilon_n(\theta)(\varphi)|   
& \leq &  2 \int_{Q_L}|\varphi'+\langle\nabla_y \varphi,Ry\rangle \alpha'(s)|\,
|i\theta \varphi| \ds\dy + \theta^2\int_{Q_L}| \varphi |^2 \ds \dy\\
& \leq &  \delta \int_{Q_L}| \varphi'+\langle\nabla_y \varphi,Ry\rangle\alpha'(s)|^2 \ds\dy + 
\theta^2(1/ \delta + 1)\int_{Q_L}| \varphi |^2 \ds\dy\\
&\leq & \delta  \, \hat{b}^\varepsilon_n(0)(\varphi) + (\pi / L)^2(1/\delta+1)\|\varphi\|_{{\cal H}_n^L}^2,
\end{eqnarray*}
for all $\varphi \in \dom \hat{b}_n^\varepsilon(0)$, for all $\theta\in \mathcal{C}$. 
Since $\delta > 0$ is arbitrary, the affirmation is proven. 
By  Theorem 4.8, Chapter VII
in \cite{kato}, $\{\hat{b}^\varepsilon_n(\theta):\theta\in \mathcal{C}\}$ is an analytic family of type $(a)$. 
Consequently, $\{-\Delta_n^\varepsilon (\theta), \theta\in \mathcal{C}\} $ is an analytic family of type $(B)$. 	 
 \end{proof}

\begin{Lemma}\label{unitary-operator}
There exists a unitary operator 
$\mathcal{U}_n: {\cal H}_n\rightarrow\int_{\mathcal{C}}^\oplus {\cal H}^L_n\, {\rm d} \theta$, so that,
\[\mathcal{U}_n (-\Delta_n^\varepsilon) \mathcal{U}^{-1}_n=\int_{\mathcal{C}}^{\oplus}
-\Delta_n^\varepsilon(\theta)\, {\rm d}\theta.\]
\end{Lemma}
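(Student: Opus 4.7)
The proof is a Floquet--Bloch reduction tailored to the subspace ${\cal H}_n$. The plan is to (i) define an explicit candidate transform ${\cal U}_n$ on a dense subspace, (ii) check that it is unitary between $L^2(\R\times S)$ and $\int_{\mathcal C}^\oplus L^2(Q_L)\,{\rm d}\theta$ and that it respects the orthogonal decomposition in the $y$ variable, and (iii) check, via the quadratic forms $\overline{b}_n^\varepsilon$ and $\hat{b}_n^\varepsilon(\theta)$, that it conjugates $-\Delta_n^\varepsilon$ into the direct integral. For $\psi\in C_c^\infty(\R\times S)$ and $\theta\in\mathcal{C}$ I would set
\begin{equation*}
({\cal U}_n\psi)(\theta,s,y) := \sqrt{\frac{L}{2\pi}}\, e^{-i\theta s}\sum_{k\in\Z} e^{-i\theta kL}\,\psi(s+kL,y),\qquad (s,y)\in Q_L.
\end{equation*}
A change-of-index argument shows that the raw series $\sum_k e^{-i\theta kL}\psi(s+kL,y)$ is Bloch $\theta$-quasi-periodic in $s$, so the prefactor $e^{-i\theta s}$ converts it into an $L$-periodic function on $Q_L$, matching the trace condition $\varphi(0,\cdot)=\varphi(L,\cdot)$ appearing in $\dom\hat{b}_n^\varepsilon(\theta)$.

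Unitarity is the Parseval identity for Fourier series: expanding $\int_{\mathcal C}\int_{Q_L}|{\cal U}_n\psi|^2\,\ds\,\dy\,{\rm d}\theta$ and using $\int_{\mathcal C}e^{i\theta mL}\,{\rm d}\theta=(2\pi/L)\delta_{m,0}$ collapses the double sum to $\sum_k\|\psi(\cdot+kL,\cdot)\|_{L^2(Q_L)}^2=\|\psi\|_{L^2(\R\times S)}^2$. Extending by density gives a unitary ${\cal U}_n:L^2(\R\times S)\to\int_{\mathcal C}^\oplus L^2(Q_L)\,{\rm d}\theta$. Since ${\cal U}_n$ does not act on the $y$ variable, it commutes with the spectral projections of $-\Delta_S^N$ (the $\langle\cdot,u_j\rangle_{L^2(S)}$ coefficients are transformed scalarly in $\theta$), so it maps ${\cal L}_1\oplus\cdots\oplus{\cal L}_{n-1}$ onto $\int^\oplus({\cal L}_1^L\oplus\cdots\oplus{\cal L}_{n-1}^L)\,{\rm d}\theta$ and therefore restricts to a unitary ${\cal H}_n\to \int_{\mathcal C}^\oplus{\cal H}_n^L\,{\rm d}\theta$.

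To identify the operator I would pass to quadratic forms. The elementary identity $\partial_s(e^{-i\theta s}f)=e^{-i\theta s}(\partial_s-i\theta)f$, combined with the definition of ${\cal U}_n$, yields
\begin{equation*}
({\cal U}_n\psi')(\theta)=(\partial_s+i\theta)({\cal U}_n\psi)(\theta),
\end{equation*}
while $\nabla_y$ and multiplication by $\alpha'(s)$ commute with ${\cal U}_n$. Substituting into $\overline{b}_n^\varepsilon(\psi)$ and invoking Parseval fibre-by-fibre gives
\begin{equation*}
\overline{b}_n^\varepsilon(\psi) = \int_{\mathcal C}\hat{b}_n^\varepsilon(\theta)\bigl(({\cal U}_n\psi)(\theta)\bigr)\,{\rm d}\theta
\end{equation*}
for every $\psi\in H^1(\R\times S)\cap{\cal H}_n$. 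Together with the domain identification, this means that ${\cal U}_n$ intertwines the closed form $\overline{b}_n^\varepsilon$ with the direct-integral form, hence the associated self-adjoint operators, yielding the claimed decomposition.

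The main technical obstacle is the $\theta$-wise domain identification: showing that $\psi\in H^1(\R\times S)\cap{\cal H}_n$ if and only if $\theta\mapsto({\cal U}_n\psi)(\theta)$ lies in $\dom\hat{b}_n^\varepsilon(\theta)$ for a.e.\ $\theta$ with square-integrable form values, together with measurability in $\theta$ of the fibre norms. The standard remedy is to verify the quadratic-form identity first on $C_c^\infty(\R\times S)\cap{\cal H}_n$, where termwise differentiation of the series is legitimate, and then to propagate it to the full domain by closedness of both forms and density.
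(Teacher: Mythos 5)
Your proposal is correct and follows essentially the same route as the paper: the same Floquet--Bloch transform $\mathcal{U}_n$, the same key intertwining identity $\partial_s\mapsto\partial_s+i\theta$ on the fibres, and the same strategy of identifying the operator by matching quadratic forms via Parseval and then invoking uniqueness of the associated self-adjoint operator. You add a welcome bit of detail the paper leaves implicit, namely that $\mathcal{U}_n$ acts only in the $s$-variable and therefore respects the $u_j$-decomposition so that it genuinely restricts to a unitary $\mathcal{H}_n\to\int_{\mathcal{C}}^{\oplus}\mathcal{H}_n^L\,{\rm d}\theta$, and you correctly flag the $\theta$-a.e.\ domain identification and density/closedness argument as the remaining technical step.
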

\begin{proof}
For $(\theta,s,y)\in \mathcal{C}\times Q_L$, define
\[(\mathcal{U}_n f)(\theta,s,y):= \sum_{k\in \mathbb{Z}}\sqrt{\frac{L}{2\pi}}e^{-ikL\theta-i\theta s}f(s+Lk,y),
\quad 
\dom {\cal U}_n = {\cal H}_n,\]
which is a unitary operator onto 
$\int_{\mathcal{C}}^\oplus {\cal H}^L_n\, {\rm d}\theta$; the definition of ${\cal U}_n$ is based on \cite{bde, reedsimon4} .
	 
Recall the quadratic form $\overline{b}^\varepsilon_n$; see (\ref{quadrestintro}) in the Introduction. Consider 
\[q^\varepsilon_n(\varphi):= \overline{b}^\varepsilon_n(\mathcal{U}_n^{-1}\varphi), \quad \quad
\dom q^\varepsilon_n:= {\cal U}_n(\dom \overline{b}^\varepsilon_n).\]
Note that  $q^\varepsilon_n$ is a closed and bounded from below quadratic form in the Hilbert space 
$\int_{\mathcal{C}}^\oplus {\cal H}^L_n\, {\rm d}\theta$, and
$\mathcal{U}_n (-\Delta_n^\varepsilon) \mathcal{U}^{-1}_n$ is 
the self-adjoint operator associated with it. 
	
For $(s,y) \in Q_L$ and $k\in \mathbb{Z}$,
\[ (\mathcal{U}_n^{-1}\varphi)(s+Lk, y)=\int_{\mathcal{C}}  \sqrt{\frac{L}{2\pi}} 
e^{ikL\theta +is\theta}\varphi(\theta,s,y) \, {\rm d}\theta,\]
\[ (\mathcal{U}_n^{-1}\varphi)'(s+ Lk, y)=\int_{\mathcal{C}}  \sqrt{\frac{L}{2\pi}} 
e^{ikL\theta +is\theta}(\varphi'(\theta,s,y)+i\theta\varphi(\theta,s,y))\,{\rm d}\theta,\]
and
\[ \nabla_y(\mathcal{U}_n^{-1}\varphi)(s+ L k, y)=\int_{\mathcal{C}}  \sqrt{\frac{L}{2\pi}} 
e^{ikL\theta +is\theta}\nabla_y\varphi(\theta,s,y)\,{\rm d}\theta.\]
Since $\alpha'$ is an $L$-periodic function,  by Parseval's identity, and by Fubini's Theorem, we have 
\begin{eqnarray*}
q^\varepsilon_n(\varphi)&=& \overline{b}^\varepsilon_n(\mathcal{U}_n^{-1}\varphi)\\
& = &\int_{Q} \left(
	\left| (\mathcal{U}_n^{-1}\varphi)' + \langle \nabla_y (\mathcal{U}_n^{-1}\varphi), R y \rangle \alpha'(s) \right|^2
	+ \frac{|\nabla_y (\mathcal{U}_n^{-1}\varphi)|^2}{\varepsilon^2}
	\right) \ds \dy \\
	&=&\sum_{k \in \mathbb{Z}}\int_{Q_L} 
	\left| (\mathcal{U}_n^{-1}\varphi)'(s+Lk,y) + 
	\langle \nabla_y (\mathcal{U}_n^{-1}\varphi)(s+Lk,y), R y \rangle \alpha'(s) \right|^2\,\ds \dy \\
	&+& \sum_{k\in \mathbb{Z}}\int_{Q_L}\frac{1}{\varepsilon^2}\left|\nabla_y (\mathcal{U}_n^{-1}\varphi)(s+Lk,y)\right|^2
	\ds \dy \\
	&=&\int_{Q_L} \sum_{k\in \mathbb{Z}}
	\left| \int_\mathcal{C}\sqrt{\frac{L}{2\pi}}e^{ikL\theta+is\theta}(\varphi'(\theta,s,y) +i\theta
\varphi(\theta,s,y) + \langle \nabla_y \varphi(\theta,s,y), R y \rangle \alpha'(s))\mathrm{d}\theta \right|^2\ds \dy \\
	&+& \int_{Q_L}\sum_{k\in \mathbb{Z}}\frac{1}{\varepsilon^2}\left|\int_\mathcal{C}\sqrt{\frac{L}{2\pi}}e^{ikL\theta+is\theta}\nabla_y 
	\varphi(\theta,s,y) {\rm d} \theta \right|^2
	\ds \dy \\	
	&=&\int_{Q_L} \left( \int_\mathcal{C}\left| (\varphi'(\theta,s,y) +i\theta
\varphi (\theta,s,y) + \langle \nabla_y \varphi(\theta,s,y), R y \rangle \alpha'(s)) \right|^2 {\rm d}\theta \right) \ds \dy \\
	&+& \int_{Q_L}\left( \int_\mathcal{C}\frac{1}{\varepsilon^2}\left|\nabla_y \varphi(\theta,s,y)d\theta\right|^2 {\rm d} \, \theta \right) \ds \dy\\
	& = &
	\int_{\cal C} \left( \int_{Q_L}\left| (\varphi'(\theta,s,y) +i\theta
\varphi (\theta,s,y) + \langle \nabla_y \varphi(\theta,s,y), R y \rangle \alpha'(s)) \right|^2 \ds \dy \right) {\rm d}\theta  \\
	&+& \int_{\cal C} \left( \int_{Q_L}\frac{1}{\varepsilon^2}\left|\nabla_y \varphi(\theta,s,y)d\theta\right|^2 \ds \dy \right) {\rm d}\theta \\
	&=:&\int_{\mathcal{C}}\hat{b}^\varepsilon_n(\theta)(\varphi(\theta))\, {\rm d} \theta.
\end{eqnarray*}		
Then, 
$\varphi \in \dom q^\varepsilon_n$ if, and only if, $\varphi \in \int_{{\cal C}}^\oplus {\cal H}_n^L {\rm d} \theta$ and
$\varphi(\theta)\in \dom \hat{b}^\varepsilon_n(\theta)$, a.e.\,$\theta$.

Now, consider the self-adjoint operator
\[Q^\varepsilon_n:=\int^\oplus_{\mathcal{C}} -\Delta_n^\varepsilon(\theta)\,d\theta,\]
where
\[\dom Q^\varepsilon_n:=\left\{\varphi:\varphi(\theta)\in \dom (-\Delta_n^\varepsilon(\theta)), \text{a.e.} \, \theta;
\int_{\cal C} \| -\Delta_n^\varepsilon(\theta) \varphi(\theta) \|_{{\cal H}_n^L}^2 {\rm d} \theta < +\infty\right\}.\]
	
For each  $\varphi \in \dom q^\varepsilon_n$ and for each $\eta \in \dom Q^\varepsilon_n$, 
\begin{eqnarray*}
q^\varepsilon_n(\varphi,\eta)&=&\int_{\mathcal{C}}\hat{b}^\varepsilon_n(\theta)\left(\varphi(\theta),\eta(\theta)\right)\,d\theta\\
&=& \int_{\mathcal{C}}\left\langle\varphi(\theta), -\Delta_n^\varepsilon(\theta)\eta(\theta)\right\rangle_{{\cal H}^L_n}d\theta\\
&=& \int_{\mathcal{C}}\left\langle\varphi(\theta),(Q^\varepsilon_n\eta)(\theta)\right\rangle_{{\cal H}^L_n}d\theta\\
&=& \left\langle \varphi, Q^\varepsilon_n \eta \right\rangle.
\end{eqnarray*} 	
Therefore, $Q^\varepsilon_n$ is the self-adjoint operator associated with $q^\varepsilon_n$ and,
by uniqueness, 
$Q^\varepsilon_n=\mathcal{U}_n(-\Delta_n^\varepsilon)\mathcal{U}_n^{-1}$. 
\end{proof}

\subsection{Proof of Theorem \ref{theoremperiodiccase}}\label{prooftheosub}

Since each  $-\Delta_n^\varepsilon(\theta)$ has compact resolvent and is lower bounded, 
its spectrum is discrete. 
We denote by $E_{n,j}(\varepsilon,\theta)$ the $j$th eigenvalue of $-\Delta_n^\varepsilon(\theta)$, counted with multiplicity, and  by 
$\psi_{n,j}(\varepsilon,\theta)$ the corresponding normalized eigenfunction, i.e., 
\[-\Delta_n^\varepsilon(\theta)\psi_{n,j}(\varepsilon,\theta)=E_{n,j}(\varepsilon,\theta)\psi_{n,j}(\varepsilon,\theta), \quad j=1,2,3,\cdots, \quad \theta\in \mathcal{C}.\]   
	  
We have 
\[E_{n,1}(\varepsilon,\theta)\leq E_{n,2}(\varepsilon,\theta)\leq\cdots\leq E_{n,j}(\varepsilon,\theta)
\leq\cdots, \quad\theta \in \mathcal{C},\]
\[\sigma(-\Delta_n^\varepsilon)=\cup_{j=1}^\infty \{E_{n,j}(\varepsilon,\mathcal{C})\}, 
\quad \text{where}\quad E_{n,j}(\varepsilon,\mathcal{C}):=\{E_{n,j}(\varepsilon,\theta):\theta\in \mathcal{C}\};\]
each $E_{n,j}(\varepsilon,\mathcal{C})$ is called of the $j$th band of $\sigma(-\Delta_n^\varepsilon)$.

Lemma \ref{analytic}  ensures that the functions $E_{n,j}(\varepsilon,\theta)$ are real analytic functions in $\theta$;  
consequently, each $E_{n,j}(\varepsilon,\mathcal{C})$ is either a closed interval or a one point set.  
The goal is to find an asymptotic behavior for the eigenvalues $E_{n,j}(\varepsilon, \theta)$, as $\varepsilon \to 0$.

Based on the discussion in the Introduction, we start to study the sequence
\begin{equation}\label{beasympeigenuni}
b^\varepsilon_n(\theta)(\psi):=\hat{b}^\varepsilon_n(\theta)(\psi)-
\frac{\lambda_n}{\varepsilon^2}\|\psi\|_{{\cal H}_n^L}^2,
\end{equation}
$\dom  b^\varepsilon_n(\theta):= \dom \hat{b}^\varepsilon_n(\theta)$. 
The self-adjoint operator associated with
$b^\varepsilon_n(\theta)$
is  $T^\varepsilon_n(\theta):= -\Delta_n^\varepsilon(\theta)-(\lambda_n/\varepsilon^2) \Id$.

Define the one-dimensional quadratic form
 \begin{eqnarray*}\label{fibers-quadratics-forms-limit}
 b_n(\theta)(w)&:=& b^\varepsilon_n(\theta)(wu_n)\\
 &=& \int_{Q_L}|w' u_n + i \theta w u_n + \langle\nabla_y u_n, Ry \rangle\alpha'(s)w|^2 \ds\dy,
 \end{eqnarray*}
 $\dom b_n(\theta):= \{w\in H^1(0,L): w(0)=w(L)\}$. 
Denote by $T_n(\theta)$ the self-adjoint operator associated with it. Namely, 
\[T_n(\theta)w:=(-i\partial_s+\theta)^2w+ V_n w,\]
$\dom T_n(\theta)=\{w\in H^2(0,L): w(0)=w(L), w'(0)=w'(L)\}$, where $V_n$ is defined by (\ref{effectivepotential})
in the Introduction.
We have

\begin{Theorem}\label{converges-norm-reslvents-fibers}
For  each $n \in \mathbb N$ and each $\theta \in {\cal C}$ fixed, 
the sequence of self-adjoint operators $T^\varepsilon_n(\theta)$ 
converges in the norm resolvent sense 
to $T_n(\theta)$ in $\mathcal{L}_n^L$, as $\varepsilon\rightarrow 0$. 
Furthermore, for $n \in \mathbb N$,
$j \in \mathbb{N}$ and $\theta \in {\cal C}$ fixed, one has 
\[\lim\limits_{\varepsilon\rightarrow 0}\left(E_{n,j}(\varepsilon, \theta)-\frac{\lambda_n}{\varepsilon^2}\right)= k_{n,j}(\theta).\]
\end{Theorem}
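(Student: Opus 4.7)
The plan is to transplant the $\Gamma$-convergence argument used to prove Theorem \ref{maintheorem} (Part B) from the waveguide $Q = I \times S$ to the fiber cell $Q_L = (0,L) \times S$, with the extra $i\theta\varphi$ term in the longitudinal derivative and the quasi-periodic boundary condition $\varphi(0,\cdot) = \varphi(L,\cdot)$ replacing the Neumann structure. More precisely, one extends $b_n^\varepsilon(\theta)$ and $b_n(\theta)$ by $+\infty$ to the whole Hilbert space ${\cal H}_n^L$, proves that $b_n^\varepsilon(\theta) \Gamma$-converges strongly and weakly to $b_n(\theta)$, and then invokes Proposition \ref{propappsecond} from Appendix \ref{app1} to obtain norm resolvent convergence in ${\cal L}_n^L$. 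The eigenvalue asymptotic $E_{n,j}(\varepsilon,\theta) - \lambda_n/\varepsilon^2 \to k_{n,j}(\theta)$ then follows by applying Corollary 2.3 in \cite{gohberg}, exactly as in the proof of Theorem \ref{maintheorem}(B).

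First, I would establish fiber analogues of Lemmas \ref{lem1} and \ref{lem2}. The proof of Lemma \ref{lem1} carries over verbatim because the bound
\[
\int_{Q_L} |\varphi'+i\theta\varphi+\langle \nabla_y\varphi,Ry\rangle \alpha'(s)|^2\,\ds\dy \le K
\]
together with the uniform bound $\int_{Q_L}|\nabla_y \varphi|^2 \le K\varepsilon^2 + \lambda_n\|\varphi\|^2$ still implies that $(\varphi_\varepsilon')_\varepsilon$ and $(\nabla_y \varphi_\varepsilon)_\varepsilon$ are bounded in $L^2(Q_L)$; the extra $i\theta\varphi_\varepsilon$ term is controlled by $\|\varphi_\varepsilon\|$. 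The proof of Lemma \ref{lem2} also transfers directly: the argument that $v(s,\cdot)$ is a.e.\ proportional to $u_n$ relies only on the spectral inequality (\ref{minmaxins}) on $S$, which is identical here. The quasi-periodic boundary condition passes to the weak limit by continuity of the trace.

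Second, I would adapt Propositions \ref{propstrong} and \ref{propweaky}. For the lower-bound $\Gamma$-liminf inequality, I again use that
\[
\varphi_\varepsilon' + i\theta\varphi_\varepsilon + \langle \nabla_y \varphi_\varepsilon, Ry\rangle \alpha'(s) \wseta w'u_n + i\theta wu_n + \langle \nabla_y u_n, Ry\rangle \alpha'(s) w
\]
in $L^2(Q_L)$ when $v_\varepsilon \rightharpoonup v = w u_n$, so weak lower semicontinuity gives $b_n(\theta)(w) \le \liminf b_n^\varepsilon(\theta)(v_\varepsilon)$. For the recovery sequence, if $v = w u_n$ with $w \in \dom b_n(\theta)$, then $v_\varepsilon := v$ works because the boundary condition $w(0)=w(L)$ ensures $v \in \dom b_n^\varepsilon(\theta)$. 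The $+\infty$ case is handled exactly as in Proposition \ref{propweaky}, via a projection onto ${\cal H}_{n+1}^L$ and the gap $\lambda_{n+1} > \lambda_n$.

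Finally, to upgrade from strong resolvent to norm resolvent convergence and extract the eigenvalue limits, I would verify condition c) of Proposition \ref{propappsecond}: since $Q_L$ is bounded, Rellich-Kondrachov gives that the embedding $H^1(Q_L)\cap{\cal H}_n^L \hookrightarrow {\cal H}_n^L$ is compact, so any sequence $(v_\varepsilon)_\varepsilon$ bounded in ${\cal H}_n^L$ with $(b_n^\varepsilon(\theta)(v_\varepsilon))_\varepsilon$ bounded is precompact in ${\cal H}_n^L$. The operator $T_n(\theta)$ has compact resolvent (it is a one-dimensional Schrödinger operator with bounded coefficients on a bounded interval). The main technical obstacle will be checking that the recovery sequence in the $\Gamma$-limsup step respects the quasi-periodic boundary condition (which is resolved by the choice $v_\varepsilon = wu_n$ itself) and that the $\theta$-dependent terms do not spoil the uniformity of the coercivity estimates; this is ensured by the bound $|\theta| \le \pi/L$ already exploited in the proof of Lemma \ref{analytic}.
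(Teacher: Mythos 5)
Your proposal is correct and follows precisely the strategy the paper itself prescribes: the paper omits this proof, stating only that it is ``very similar to the proof of Theorem~\ref{maintheorem},'' and your sketch is a faithful transplantation of that $\Gamma$-convergence argument to the fiber cell $Q_L$, with the right handling of the $i\theta$ term, the periodic boundary condition (preserved in the weak limit by boundedness of the trace operator), and the automatic validity of condition~(c) of Proposition~\ref{propappsecond} from boundedness of $Q_L$.
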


The proof  of Theorem \ref{converges-norm-reslvents-fibers} is very similar to the 
proof of Theorem \ref{maintheorem}; it will be omitted here.

Denote by $k_{n,j}(\theta) $ the $j$th eigenvalue (counted multiplicity) of $T_n(\theta)$.
As a consequence  of Theorem \ref{converges-norm-reslvents-fibers}, we have 

\begin{Corollary}\label{corbasynw}
For each $n \in \mathbb N $ and each $j \in \mathbb{N}$ fixed, one has 
\begin{equation}\label{convergens-eigenvalue}
\lim\limits_{\varepsilon\rightarrow 0}\left(E_{n,j}(\varepsilon, \theta)-\frac{\lambda_n}{\varepsilon^2}\right)= k_{n,j}(\theta), 
\end{equation}
 uniformly in ${\cal C}$.   
\end{Corollary}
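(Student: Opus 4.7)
The plan is to upgrade the pointwise convergence of Theorem \ref{converges-norm-reslvents-fibers} to uniform convergence in $\theta$ by the classical route: pointwise convergence plus equicontinuity on a compact set implies uniform convergence. Compactness of the parameter space is immediate because $\theta \mapsto E_{n,j}(\varepsilon,\theta)-\lambda_n/\varepsilon^2=:\tilde E_{n,j}(\varepsilon,\theta)$ and $\theta\mapsto k_{n,j}(\theta)$ both extend periodically and continuously to $\overline{\mathcal{C}}$. The substantive work is to produce a Lipschitz bound
\[
\bigl|\tilde E_{n,j}(\varepsilon,\theta_1)-\tilde E_{n,j}(\varepsilon,\theta_2)\bigr|\leq C_j\,|\theta_1-\theta_2|,\qquad \theta_1,\theta_2\in\mathcal{C},
\]
with $C_j$ independent of $\varepsilon$ for $\varepsilon$ small.

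To obtain the Lipschitz estimate I would work at the level of quadratic forms. Writing $A(\varphi):=\varphi'+\langle\nabla_y\varphi,Ry\rangle\alpha'(s)$, one has
\[
b_n^\varepsilon(\theta)(\varphi)=\|A(\varphi)+i\theta\varphi\|_{L^2(Q_L)}^2+\frac{1}{\varepsilon^2}\int_{Q_L}\bigl(|\nabla_y\varphi|^2-\lambda_n|\varphi|^2\bigr)\ds\dy,
\]
and for $\varphi\in H^1_{\mathrm{per}}(Q_L)\cap\mathcal{H}_n^L$ the transverse integral is non-negative by (the periodic analogue of) (\ref{minmaxins}). Hence $\|A(\varphi)\|_{L^2}\leq\sqrt{b_n^\varepsilon(\theta)(\varphi)}+(\pi/L)\|\varphi\|$. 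A direct expansion gives
\[
b_n^\varepsilon(\theta_1)(\varphi)-b_n^\varepsilon(\theta_2)(\varphi)=(\theta_1^2-\theta_2^2)\|\varphi\|^2+2(\theta_1-\theta_2)\int_{Q_L}\Imm(A(\varphi)\overline{\varphi})\ds\dy.
\]
Since the common form domain is independent of $\theta$, I apply the min-max principle with the trial subspace $V_j(\varepsilon,\theta_2):=\mathrm{span}\{\psi_{n,k}(\varepsilon,\theta_2)\}_{k\leq j}$: on this subspace $b_n^\varepsilon(\theta_2)(\varphi)\leq\tilde E_{n,j}(\varepsilon,\theta_2)\|\varphi\|^2$, which feeds into the $A$-bound to yield $\|A(\varphi)\|\leq\sqrt{\tilde E_{n,j}(\varepsilon,\theta_2)}+\pi/L$ for $\|\varphi\|=1$. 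Combining these and exchanging the roles of $\theta_1,\theta_2$ gives
\[
\bigl|\tilde E_{n,j}(\varepsilon,\theta_1)-\tilde E_{n,j}(\varepsilon,\theta_2)\bigr|\leq C\,|\theta_1-\theta_2|\,\Bigl(1+\max_{i=1,2}\sqrt{\tilde E_{n,j}(\varepsilon,\theta_i)}\Bigr).
\]

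To remove the self-reference I bootstrap using the pointwise convergence at $\theta=0$: since $\tilde E_{n,j}(\varepsilon,0)\to k_{n,j}(0)$, this quantity is bounded in $\varepsilon$, and plugging $\theta_2=0$ into the inequality above gives a uniform-in-$\varepsilon$ bound $\sup_{\theta\in\mathcal{C}}\tilde E_{n,j}(\varepsilon,\theta)\leq M_j$. Reinserting this bound yields the desired equi-Lipschitz estimate with constant $C_j$ independent of $\varepsilon$. Since $k_{n,j}$ is continuous on the compact set $\overline{\mathcal{C}}$ and the family $\{\tilde E_{n,j}(\varepsilon,\cdot)\}_\varepsilon$ is equicontinuous with pointwise limit $k_{n,j}$, an Arzelà-Ascoli-type argument concludes the uniform convergence on $\mathcal{C}$.

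The main obstacle is precisely making the Lipschitz constant independent of $\varepsilon$: naively, the term $(1/\varepsilon^2)\int(|\nabla_y\varphi|^2-\lambda_n|\varphi|^2)$ could wreck the control of $\|A(\varphi)\|$, and the whole scheme depends on the positivity of this transverse integral on $\mathcal{H}_n^L$. This is why the restriction to $\mathcal{H}_n^L$ (rather than all of $L^2(Q_L)$) is essential, and why the bootstrap using the value at $\theta=0$ is legitimate only after one knows the pointwise limit is finite.
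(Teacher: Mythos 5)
Your proof is correct, but it takes a genuinely different route from the paper's. The paper extends $b_n^\varepsilon(\theta)$ continuously to the compact set $\overline{\mathcal{C}}$, observes that for $\varepsilon_1 < \varepsilon_2$ one has $b_n^{\varepsilon_1}(\theta)(\psi) \geq b_n^{\varepsilon_2}(\theta)(\psi)$ on $\dom b_n^\varepsilon(\theta)$ (because the transverse term $\varepsilon^{-2}\int_{Q_L}(|\nabla_y\psi|^2 - \lambda_n|\psi|^2)$ is nonnegative on $\mathcal{H}_n^L$ and increases as $\varepsilon$ shrinks), concludes by min-max that $\theta \mapsto E_{n,j}(\varepsilon,\theta) - \lambda_n/\varepsilon^2$ is monotone in $\varepsilon$, and then invokes Dini's theorem: a monotone sequence of continuous functions converging pointwise to a continuous limit on a compact set converges uniformly. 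Your argument instead establishes equi-Lipschitz continuity in $\theta$ (uniformly in $\varepsilon$) via a min-max comparison of the fiber forms at two quasimomenta, closes a self-referential bound by bootstrapping from the finite pointwise limit at $\theta = 0$, and then uses the standard fact that equicontinuity plus pointwise convergence on a compact set yields uniform convergence. Interestingly, both proofs pivot on the same positivity of the transverse term on $\mathcal{H}_n^L$ — the paper uses it to get monotonicity in $\varepsilon$, you use it to control $\|A(\varphi)\|$ and hence the $\theta$-modulus of continuity. The paper's monotonicity-plus-Dini argument is substantially shorter and more elegant; your equicontinuity route is heavier in bookkeeping but is the more robust pattern, as it would survive perturbations that destroy monotonicity in $\varepsilon$.
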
	
\begin{proof}
For $n \in \mathbb N$ fixed, extend $b_n^\varepsilon(\theta)$ by the formulas (\ref{fibersquaform}) and (\ref{beasympeigenuni}), for all $\theta \in \overline{{\cal C}}$.
Theorem \ref{converges-norm-reslvents-fibers} holds true if we consider $\overline{{\cal C}}$ instead of ${\cal C}$.
Then, (\ref{convergens-eigenvalue}) holds true for each $j \in \mathbb N$  and each $\theta \in \overline{{\cal C}}$. 
On the other hand, if  $\varepsilon_1<\varepsilon_2$, then
$b^{\varepsilon_2}_n(\theta)(\psi) \leq b^{\varepsilon_1}_n(\theta)(\psi)$, for all $\psi \in \dom b^\varepsilon_n(\theta)$, 
for all $\theta\in \overline{{\cal C}}$.  
Thus, for each $j \in \mathbb N$ and each $\theta \in \overline{{\cal C}}$, the sequence
$(E_{n,j}(\varepsilon,\theta)-\lambda_n/\varepsilon^2)$ is decreasing in $\varepsilon$.
Now, the result follows by  Dini's Theorem.
\end{proof}

\vspace{0.3cm}
\noindent
{\bf Proof of Theorem \ref{theoremperiodiccase}:}
Let $E>0$, without loss of generality, we can suppose that, for all $\theta\in \mathcal{C}$, 
the spectrum of $-\Delta_n^\varepsilon(\theta)$ below $E$ consists of exactly $j_0$  eigenvalues 
$\{E_{n,j}(\varepsilon,\theta)\}_{j=1}^{j_0}$. 
Lemma \ref{analytic} ensures that $E_{n,j}(\varepsilon,\theta)$ and $\psi_{n,j}(\varepsilon, \theta)$ are real analytic functions
in $\theta \in {\cal C}$.

Theorem XIII in \cite{reedsimon4} implies that the functions $k_{n,j}(\theta)$ are  nonconstant. 
By Corollary \ref{corbasynw}, there exist $\varepsilon_E > 0$, $K(\varepsilon) > 0$, so that, 
$|E_{n,j}(\varepsilon, \theta)- (\lambda_n/\varepsilon^2)-\kappa_{n,j}(\theta)|
< K(\varepsilon)$, for all $\theta \in {\cal C}$, for all $\varepsilon \in (0, \varepsilon_E)$,
for all $j = 1, 2, \cdots, j_0$, and $K(\varepsilon) \to 0$, as $\varepsilon \to 0$.
Consequently, the functions $E_{n,j}(\varepsilon, \theta)$ are nonconnstant.
Note that $\varepsilon_E > 0$ depends on $j_0$, i.e., the thickness of the
tube depends on the length of the energies to be covered.
Now, by Section XIII.16 in \cite{reedsimon4}, the
conclusion follows.

\subsection{Existence of band gaps}\label{subsection-gaps}

In this section we are going to prove Theorem \ref{gap-exists}.
Consider the one-dimensional operator
\[\tilde{T}_nw:=-w''+V_nw,\quad \dom \tilde{T}_n=H^2(\mathbb{R}).\]
	
We have denoted by  $k_{n,j}(\theta)$ the $j$th eigenvalue (counted with multiplicity) of the operator 
$T_n(\theta)$. For each $j\in \mathbb{N}$, $k_{n,j}(\theta)$ is a real analytic function in $\mathcal{C}$. 
By Chapter XIII.16 in \cite{reedsimon4}, we have the following properties:

\vspace{0.3cm}
\noindent
(a) $k_{n,j}(\theta)=k_{n,j}(-\theta)$ for all $\theta\in \mathcal{C}$, $j=1,2,3,\cdots.$

\vspace{0.3cm}
\noindent
(b) For $j$ odd (resp. even), $ k_{n,j}(\theta)$ is strictly monotone increasing (resp. decreasing) as $\theta$ increases from $0$ to $\pi/L$. 
In particular, 
\[k_{n,1}(0)<k_{n,1}(\pi/L)\leq k_{n,2}(\pi/L)<k_{n,2}(0)  \leq \cdots  \leq k_{n,2j-1}(0)<k_{n,2j-1}(\pi/L)\]
\[\leq k_{n,2j}(\pi/L)<k_{n,2j}(0) \leq \cdots.\]

For each $j \in \mathbb N$, define
\begin{displaymath}
B_{n,j}:=\left\{ \begin{array}{lcc}
[ k_{n,j}(0),k_{n,j}(\pi/L) ], & \text{for}\,\, j\,\, \text{odd},\\
\left[k_{n,j}(\pi/L),k_{n,j}(0) \right], & \text{for}\,\, j\,\, \text{even},
\end{array}\right.
\end{displaymath}
and
\begin{displaymath}
G_{n,j}:=\left\{ \begin{array}{ll}
( k_{n,j}(\pi/L),k_{n,j+1}(\pi/L) ), & \text{for}\,\, j\,\, \text{odd so that } k_{n,j}(\pi/L)\neq k_{n,j+1}(\pi/L),\\
(k_{n,j}(0),k_{n,j+1}(0) ), & \text{for}\,\, j\,\, \text{even so that } k_{n,j}(0)\neq k_{n,j+1}(0) ,\\ 
\emptyset, & \text{otherwise}.
 \end{array}\right.
 \end{displaymath}	

Then, by Theorem XIII.90 in \cite{reedsimon4}, one has $\sigma(\tilde{T}_n)= \cup_{j=1}^\infty B_{n,j}$, 
where $B_{n,j}$ is called of the $j$th band of $\sigma(\tilde{T}_n) $. If $G_{n,j}\neq \emptyset$,   $G_{n,j}$ is called of
gap of $\sigma(\tilde{T}_n)$.  

By Corollary \ref{corbasynw} and since $E_{n,j}(\varepsilon, \theta)$ is a decreasing sequence,
for each $j \in \mathbb{N}$,  and for each $\varepsilon>0$,
\[\max\limits_{\theta\in\mathcal{C}}E_{n,j}(\varepsilon,\theta) \leq \left\{ \begin{array}{ll}
\lambda_n/\varepsilon^2 +k_{n,j}(\pi/L), &\text{for}\,\, j\,\,\text{odd},\\
\lambda_n/\varepsilon^2+ k_{n,j}(0), & \text{for}\,\, j\,\, \text{even},
\end{array}\right..\]
If $G_{n,j}\neq \emptyset$, again by Corollary \ref{corbasynw}, there exists $\varepsilon_j>0$, so that, 
for all $\varepsilon\in (0,\varepsilon_j)$,   
\[\min\limits_{\theta\in\mathcal{C}}E_{n,j+1}(\varepsilon,\theta) \geq \left\{ \begin{array}{ll}
\lambda_n/\varepsilon^2 +k_{n,j+1}(\pi/L)-|G_{n,j}|/2, &\text{for}\,\, j\,\,\text{odd},\\
\lambda_n/\varepsilon^2+ k_{n,j+1}(0)-|G_{n,j}|/2, & \text{for}\,\, j\,\, \text{even},
\end{array}\right. \]
where $|\cdot|$ denotes the Lebesgue measure.  Thus, we have

\begin{Corollary}\label{corollary-gaps}
If $G_{n,j}\neq \emptyset$, there exists $\varepsilon_j>0$, so that, for all $\varepsilon\in (0,\varepsilon_j)$, 
\[\min\limits_{\theta\in\mathcal{C}}E_{n,j+1}(\varepsilon,\theta)-\max\limits_{\theta\in\mathcal{C}}E_{n,j}(\varepsilon,\theta)\geq \frac{1}{2}|G_{n,j}|.\]
\end{Corollary}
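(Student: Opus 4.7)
The plan is to deduce Corollary \ref{corollary-gaps} directly by combining the two bounds on $\max_\theta E_{n,j}(\varepsilon,\theta)$ and $\min_\theta E_{n,j+1}(\varepsilon,\theta)$ that are displayed immediately before the statement, each of which can be justified from Corollary \ref{corbasynw} together with the band-monotonicity property (b). Once both bounds are in place, the conclusion follows by subtraction: the divergent term $\lambda_n/\varepsilon^2$ cancels and what remains is $|G_{n,j}|-|G_{n,j}|/2=|G_{n,j}|/2$.

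For the upper bound on $\max_\theta E_{n,j}(\varepsilon,\theta)$, I would first use the fact recorded in the proof of Corollary \ref{corbasynw} that $b^{\varepsilon_2}_n(\theta)\le b^{\varepsilon_1}_n(\theta)$ for $\varepsilon_1<\varepsilon_2$, so by the min-max principle $E_{n,j}(\varepsilon,\theta)-\lambda_n/\varepsilon^2$ is monotone decreasing in $\varepsilon$ and therefore, by the convergence to $k_{n,j}(\theta)$ proven in Corollary \ref{corbasynw}, satisfies $E_{n,j}(\varepsilon,\theta)-\lambda_n/\varepsilon^2\le k_{n,j}(\theta)$ for every $\varepsilon>0$ and every $\theta\in\overline{\mathcal{C}}$. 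Property (b) from Section XIII.16 of \cite{reedsimon4} then identifies the maximum of $k_{n,j}$ on $\mathcal C$: it is attained at $\theta=\pi/L$ when $j$ is odd and at $\theta=0$ when $j$ is even, which yields the displayed $\varepsilon$-uniform upper bound. For the lower bound on $\min_\theta E_{n,j+1}(\varepsilon,\theta)$, I would use the \emph{uniform} convergence on $\overline{\mathcal C}$ from Corollary \ref{corbasynw} to pick $\varepsilon_j>0$ so that $|E_{n,j+1}(\varepsilon,\theta)-\lambda_n/\varepsilon^2-k_{n,j+1}(\theta)|<|G_{n,j}|/2$ for all $\theta\in\overline{\mathcal C}$ and all $\varepsilon\in(0,\varepsilon_j)$, and then again invoke property (b) to locate the minimum of $k_{n,j+1}$ at $\pi/L$ for $j$ odd and at $0$ for $j$ even.

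Taking the difference of these two bounds, and recalling that by the definition of $G_{n,j}$ its length equals $k_{n,j+1}(\pi/L)-k_{n,j}(\pi/L)$ for $j$ odd and $k_{n,j+1}(0)-k_{n,j}(0)$ for $j$ even, one obtains the desired inequality. The argument is essentially bookkeeping: the only subtleties are the case split between $j$ odd and $j$ even, and keeping track of the fact that the upper bound is $\varepsilon$-uniform (because of the monotonicity in $\varepsilon$) whereas the lower bound requires restricting to $\varepsilon\in(0,\varepsilon_j)$. I do not expect any substantive obstacle beyond tracking these two cases consistently.
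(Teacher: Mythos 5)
Your proposal is correct and takes essentially the same approach as the paper: the paper's own proof consists precisely of the two displayed bounds on $\max_\theta E_{n,j}(\varepsilon,\theta)$ and $\min_\theta E_{n,j+1}(\varepsilon,\theta)$ stated immediately before the Corollary (the $\varepsilon$-uniform upper bound via monotonicity in $\varepsilon$, and the lower bound for $\varepsilon\in(0,\varepsilon_j)$ via the uniform convergence from Corollary \ref{corbasynw}), followed by the same subtraction, and you reproduce and justify both bounds with the identical ingredients (min-max monotonicity, property (b), and the definition of $|G_{n,j}|$).
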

	
Another important tool to prove Theorem \ref{gap-exists} is the following result due to Borg  \cite{borg}.
 
\begin{Theorem}\label{borg}
(Borg)
Suppose that $W$ is a real-valued, piecewise continuous function on $[0,L]$.  Let $\mu^{\pm}_j$ be the $j$th eigenvalue of the following operator counted multiplicity respectively 
\[ T^\pm:=-\frac{d^2}{ds^2}+ W(s), \quad \text{in}\quad L^2(0,L),\]
with domain
\[\{w\in H^2(0,L): w(0)=\pm w(L), w'(0)=\pm w'(L)\}.\]
We suppose that 
 \[ \mu^{+}_j=\mu^+_{j+1}, \quad\text{for all even }\, j,\]
 and 
 \[ \mu^{-}_j=\mu^-_{j+1}, \quad\text{for all odd }\, j.\]
Then, W is constant on $[0,L]$.
\end{Theorem}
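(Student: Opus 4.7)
The plan is to rephrase the hypotheses as the statement that every spectral gap of the corresponding Hill operator on $L^2(\mathbb{R})$ is closed, and then exploit an inverse-spectral argument. Introduce the fundamental solutions $c(s,\lambda)$ and $s(s,\lambda)$ of $-y''+Wy=\lambda y$ on $[0,L]$ (with the usual initial data at $s=0$), and set the Hill discriminant $\Delta(\lambda):=c(L,\lambda)+s'(L,\lambda)$. Then $\mu_j^+$ are exactly the zeros of $\Delta(\lambda)-2$ and $\mu_j^-$ the zeros of $\Delta(\lambda)+2$, satisfying the standard interlacing
\[
\mu_1^+<\mu_1^-\le\mu_2^-<\mu_2^+\le\mu_3^+<\mu_3^-\le\mu_4^-<\mu_4^+\le\cdots.
\]
Under this ordering, the assumptions $\mu_j^+=\mu_{j+1}^+$ (for even $j$) and $\mu_j^-=\mu_{j+1}^-$ (for odd $j$) are precisely the statements that each band gap of the full-line periodic Schr\"odinger operator closes, i.e., every zero of $\Delta^2-4$ other than $\mu_1^+$ is a double zero.

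Next, I would combine this multiplicity information with the classical growth of $\Delta$ to pin it down as an entire function. The function $F(\lambda):=\Delta(\lambda)^2-4$ has order $1/2$, and its zero set with multiplicities is dictated by the hypotheses. Hadamard factorization together with the standard asymptotic $\Delta(\lambda)=2\cos(L\sqrt{\lambda})+O(|\lambda|^{-1/2})$ as $\lambda\to+\infty$ then forces
\[
\Delta(\lambda)=2\cos\!\bigl(L\sqrt{\lambda-\mu_1^+}\bigr),
\]
which is exactly the discriminant of the constant-potential operator $-d^2/ds^2+\mu_1^+$ on $[0,L]$.

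The final and hardest step is the inverse-spectral one: showing that the explicit formula for $\Delta$ forces $W\equiv\mu_1^+$. The route I would take is to compare the large-$\lambda$ asymptotic expansions of the two discriminants; the successive coefficients are integrals of polynomial expressions in $W$ and its derivatives, and matching them term by term forces $\int_0^L(W-\mu_1^+)^k\,\ds=0$ for every $k$, which together with piecewise continuity yields $W\equiv\mu_1^+$. A cleaner alternative is to appeal to the classical inverse-spectral uniqueness for Hill's equation (Marchenko, Hochstadt): the Lyapunov function $\Delta$ determines a real piecewise-continuous $L$-periodic potential uniquely, so one reads off $W\equiv\mu_1^+$ directly from the explicit formula above. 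The principal obstacle of the proof is concentrated in this last step; everything leading up to it is a formal Floquet-theoretic reduction.
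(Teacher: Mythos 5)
The paper does not prove this statement: it is the classical 1946 theorem of Borg, cited from \cite{borg} and stated without proof, so there is no in-paper argument to compare against. Taken on its own merits, your sketch follows what is essentially the modern Hochstadt framing, and the first two steps are sound: the hypotheses are indeed equivalent to every spectral gap (other than $(-\infty,\mu_1^+)$) being closed, and the Hadamard-factorization-plus-asymptotics argument correctly pins down $\Delta(\lambda)=2\cos\bigl(L\sqrt{\lambda-\mu_1^+}\bigr)$.

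The last step, however, has a genuine gap as written. The claim that ``the Lyapunov function $\Delta$ determines a real piecewise-continuous $L$-periodic potential uniquely'' is false in general: the discriminant determines only the periodic/antiperiodic spectrum, and there is an isospectral family of periodic potentials sharing a given $\Delta$, parameterized by a Dirichlet divisor with one free Dirichlet eigenvalue in each open gap (Trubowitz, McKean--van Moerbeke). Citing Marchenko-type uniqueness here is therefore not legitimate on its own. What rescues the argument is precisely the closed-gap hypothesis: every gap is a single point, so each Dirichlet eigenvalue is trapped and forced to coincide with the corresponding double gap edge; the isospectral set collapses to a point, and only then does $\Delta$ determine $W$. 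This is the step you must make explicit. Your alternative ``moment-matching'' route also needs repair: the $O(\lambda^{-k/2})$ coefficients in the large-$\lambda$ expansion of $\Delta$ are iterated integrals of $W$ against trigonometric kernels, not $\int_0^L (W-\mu_1^+)^k\,\ds$. The information one actually extracts cleanly is $\int_0^L W\,\ds = L\mu_1^+$ from the first correction and $\int_0^L W^2\,\ds = L(\mu_1^+)^2$ from the second trace formula, after which Cauchy--Schwarz forces $W\equiv\mu_1^+$ a.e.; but deriving that second identity requires the regularized trace formula for the periodic eigenvalues rather than a naive term-by-term expansion of $\Delta$.
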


\vspace{0.3cm}
\noindent
{\bf Proof of Theorem \ref{gap-exists}:}
Take $W(s)=V_n(s)$ in Theorem \ref{borg}. 
The operator $T_n(0)$ (resp. $T_n(\pi/L)$ ) is unitarily equivalent to $T^+$ 
(resp. $T^-$); in fact, just to consider the unitary operator $(u_\theta w)(s):=e^{-i\theta s} w(s)$ with $\theta =0$ 
(resp. $\theta = \pi/L$).
Remember that $\{k_{n,j}(0)\}_{j \in \mathbb N}$ (resp. $\{k_{n,j}(\pi/L)\}_{j \in \mathbb N}$) are the eigenvalues of $T_n(0)$ 
(resp. $T_n(\pi/L)$).

Since $V_n(s)$ is not a constant function in $[0,L]$, by Borg's Theorem, without loss of generality, we can affirm that there exists    $j\in \mathbb{N}$ so that  $k_{n,j}(0)\neq k_{n,j+1}(0)$. Now,  the result follows by Corollary \ref{corollary-gaps}.

\subsection{Location of band gaps}\label{locationbandsec}

In this section we find a location in $\sigma(-\Delta_n^\varepsilon)$ where Theorem \ref{gap-exists} holds true. 
For this purpose, we use the scaling
\begin{equation}\label{scaling}\alpha\mapsto \gamma \alpha,
\end{equation} 
where $\gamma>0$ is a small parameter.
Thus, we obtain the waveguide $\Omega^\alpha_{\varepsilon,\gamma}:=\Omega^{\gamma\alpha}_{\varepsilon}$.
Consider $-\Delta^N _{\Omega^\alpha_{\varepsilon,\gamma}}$  instead of $-\Delta^N_{\Omega^\alpha_{\varepsilon}}$ in the Introduction.
Denote by  $\overline{b}^{\varepsilon,\gamma}_n $ and $\hat{b}^{\varepsilon,\gamma}_n(\theta)$ the quadratic forms obtained by replacing 
(\ref{scaling}) in (\ref{quadrestintro}) and (\ref{fibersquaform}), respectively. 
The self-adjoint operators associated with these quadratic forms are denoted by 
$-\Delta^{\varepsilon,\gamma}_n$ and $-\Delta^{\varepsilon,\gamma}_n(\theta)$, respectively.
Denote  by $E_{n,j}{(\gamma,\varepsilon,\theta)} $ the $j$th eigenvalue of 
$-\Delta^{\varepsilon,\gamma}_n(\theta)$ counted with multiplicity.

Define $W_n(s):=C_n^1(S)(\alpha'(s))^2.$ Write $W_n(s)$ as 
a Fourier Series, i.e., 
\[W_n(s)=\sum_{j=-\infty}^{+\infty}\frac{1}{\sqrt{L}} w_n^j e^{2\pi jis/L} \quad \text{in}\quad L^2[0,L].\]
The sequence $\{w_n^j\}_{j=-\infty}^{\infty}$ is called of Fourier coefficients of $W_n$. 
Since $W_n$ is a real function, $ w_n^j=\overline{w_n^{-j}}$, for all $j\in \mathbb{Z}$.
We have

\begin{Theorem}\label{location}
Suppose that $V_n(s)$ is not a constant function in $[0,L]$ and $W_n(s)$ is non null. 
Let $j\in \mathbb{N}$ so that $w_n^j\neq 0$. Then, there exist $\gamma> 0$ small enough, $\varepsilon_{n,j+1}>0$
and $C_{n,j}(\gamma)>0$, so that, for all $\varepsilon\in(0,\varepsilon_{n,j+1})$,
\[\min_{\theta\in \mathcal{C}}E_{n,j+1}(\gamma,\varepsilon,\theta)-\max_{\theta\in \mathcal{C}} 
E_{n,j}(\gamma,\varepsilon,\theta)\geq C_{n,j}(\gamma).\]
\end{Theorem}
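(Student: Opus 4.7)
\vspace{0.3cm}
\noindent
\textbf{Proof proposal (plan).} The scaling $\alpha\mapsto\gamma\alpha$ replaces the effective one-dimensional potential by
\[V_n^\gamma(s) := \gamma^2 W_n(s) - \gamma\,C_n^2(S)\,\alpha''(s),\]
which tends to $0$ with $\gamma$. My first step would be to repeat, for this scaled family, the arguments behind Theorem \ref{converges-norm-reslvents-fibers} and Corollary \ref{corbasynw}, yielding the obvious $\gamma$-analogue: for each fixed $\gamma>0$, any gap of size $g>0$ between the $j$-th and $(j+1)$-th bands of the one-dimensional operator $\tilde{T}_n^\gamma := -d^2/ds^2 + V_n^\gamma$ on $L^2(\mathbb{R})$ forces, for all $\varepsilon$ small enough, the inequality of Theorem \ref{location} with $C_{n,j}(\gamma) = g/2$. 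The entire problem thus reduces to opening the $j$-th gap of the one-dimensional periodic Schr\"odinger operator $\tilde{T}_n^\gamma$ with a small periodic potential.

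\vspace{0.3cm}
\noindent
The unperturbed operator $\tilde{T}_n^0 = -d^2/ds^2$ on $L^2(\mathbb{R})$ has touching bands: in the Floquet picture, the $j$-th and $(j+1)$-th eigenvalues of $(-i\partial_s+\theta)^2$ coincide at $\theta=\pi/L$ for $j$ odd and at $\theta=0$ for $j$ even, with common value $(j\pi/L)^2$ and degenerate eigenspace spanned by $\varphi_\pm(s)=L^{-1/2}e^{\pm i\pi j s/L}$. Applying Kato's degenerate analytic perturbation theory \cite{kato} with $\gamma$ as the small parameter, the two eigenvalues split at the critical value of $\theta$ as
\[\left(\tfrac{j\pi}{L}\right)^2 \pm m_n^\gamma(j) + O\bigl(\|V_n^\gamma\|_\infty^2\bigr),\qquad m_n^\gamma(j):=\bigl|\gamma^2 w_n^j - \gamma\,C_n^2(S)\,(\alpha'')^j\bigr|,\]
where $(\alpha'')^j$ denotes the $j$-th Fourier coefficient of $\alpha''$ in the same convention as for $W_n$. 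Since $w_n^j\neq 0$, the polynomial $\gamma\mapsto \gamma^2 w_n^j - \gamma C_n^2(S)(\alpha'')^j$ vanishes on at most one nonzero value; hence, for $\gamma>0$ small and different from this possible exception, $m_n^\gamma(j)>0$, and by choosing $\gamma$ even smaller one can arrange that this main term dominates the $O(\|V_n^\gamma\|_\infty^2)$ remainder, producing an explicit one-dimensional gap of size $C_{n,j}(\gamma)>0$.

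\vspace{0.3cm}
\noindent
The main obstacle is turning the \emph{asymptotic} splitting of the second step into a genuine positive lower bound on the gap that dominates the perturbation remainder. The cleanest route is the Riesz-projection / Feshbach reduction: on the two-dimensional degenerate spectral subspace the perturbed eigenvalues are roots of an analytic $2\times 2$ characteristic polynomial whose off-diagonal entry is, to leading order in $\gamma$, exactly the Fourier coefficient appearing in $m_n^\gamma(j)$; explicit bounds on the splitting can then be read off from this polynomial. Alternatively, one may invoke the classical small-coupling asymptotics of the gap widths of Hill's equation (gap $j \sim 2|\widehat V(j)|$ for a small periodic $V$), which yield the required lower bound directly. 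Once the one-dimensional gap is secured, the first step transfers it to $\sigma(-\Delta_n^{\varepsilon,\gamma})$ for $\varepsilon<\varepsilon_{n,j+1}$, finishing the proof.
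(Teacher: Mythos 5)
Your overall strategy --- open a gap for the one-dimensional fiber operator and transfer it to $-\Delta_n^{\varepsilon,\gamma}$ via a norm-resolvent argument --- matches the paper's; the transfer step is exactly Corollary~\ref{corollary-gaps-location}. Where you diverge is the one-dimensional gap-opening. The paper does not perturb directly around the free operator with the full potential $V_n^\gamma$: it applies Yoshitomi's Borg-type asymptotics (Theorem~\ref{asymptotic-yositomi}) to the auxiliary potential $\gamma^2 W_n$ alone (the operator $S_n^\gamma$), and then transfers the resulting gap to $T_n^\gamma$ through the relative form bound~(\ref{inequadformusefulappox}) and Corollary~\ref{corollary-gamma-limit}. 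Your direct degenerate-perturbation route is more self-contained and, in spirit, keeps track of the $\alpha''$ contribution in the leading term $m_n^\gamma(j)$, which the paper's intermediate-operator device sets aside.

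However, the pivotal step ``by choosing $\gamma$ even smaller one can arrange that this main term dominates the $O(\|V_n^\gamma\|_\infty^2)$ remainder'' does not go through in general. Since $V_n^\gamma=\gamma^2 W_n-\gamma\,C_n^2(S)\,\alpha''$, one has $\|V_n^\gamma\|_\infty=O(\gamma)$ whenever $\alpha''\not\equiv 0$, so the remainder is $O(\gamma^2)$. If $(\alpha'')^j\neq 0$, then $m_n^\gamma(j)\sim \gamma\,|C_n^2(S)(\alpha'')^j|=O(\gamma)$ does dominate. But if $(\alpha'')^j=0$ --- which is perfectly compatible with $w_n^j\neq 0$, e.g.\ when $\widehat{\alpha}(j)=0$ but $\alpha'$ has Fourier content at frequencies combining to $j$ through the quadratic $W_n = C_n^1(\alpha')^2$ --- then $m_n^\gamma(j)=\gamma^2|w_n^j|$ is of the \emph{same} order $O(\gamma^2)$ as the remainder, and shrinking $\gamma$ does not separate them. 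In that case you must actually compute the $\gamma^2$-coefficient of the off-diagonal entry of the Feshbach $2\times 2$ block: the first-order $W_n$ contribution and the second-order $\alpha''$ contribution both enter at order $\gamma^2$ and could in principle cancel, so an additional argument (or a nondegeneracy hypothesis) is needed before you can declare the one-dimensional gap open and hand it to the transfer step.
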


To prove Theorem \ref{location} we are going to use a strategy adopted in \cite{yoshitomi}. Some steps will be omitted here
and a more complete proof can be found in that work. In addition, our problem requires some more adjustments
which will be  explained in the next paragraphs.

\vspace{0.3cm}
\noindent
{\bf Some technical details.}
Let $W\in L^2[0,L]$ be a real function. For $\beta \in \mathbb{C}$, consider the operators 
\[T_\beta^+w=-w''+\beta W(s)w,\quad  \text{and} \quad T_\beta^-w=-w''+\beta W(s)w, \]
with domains given by 
\begin{equation}\label{condperi}
\dom T_\beta^+ = \{w\in H^2(0,L): w(0)=w(L), w'(0)= w'(L)\},
\end{equation}
\begin{equation}\label{condantiperi}
\dom T_\beta^-  =  \{w\in H^2(0,L): w(0)= - w(L), w'(0)=- w'(L)\},
\end{equation}
respectively. 
Denote by  $\{l^+_j(\beta)\}_{j\in \mathbb{N}}$ and $\{l^-_j(\beta)\}_{j\in\mathbb{N}} $ the eigenvalues of 
$T_\beta^+$ and $T_\beta^-$, respectively. For $\beta\in \mathbb{R}$ and $j\in \mathbb{N}$, define 

\[\delta_j^+(\beta):= l^+_{2j+1}(\beta)-l^+_{2j}(\beta) \quad\text{and} \quad \delta^-_j(\beta):=l^-_{2j}(\beta)-l^-_{2j-1}(\beta).\]
Now, \[ \delta_{2j-1}(\beta):=\delta^-_j (\beta) \quad\text{and} \quad \delta_{2j}(\beta):=\delta^+_j (\beta).\]

Let  $\{w^j\}_{j=-\infty}^{+\infty}$ be the Fourier coefficients of $W$;
\[W(s)=\sum_{j=-\infty}^{+\infty}\frac{1}{\sqrt{L}} w^j e^{2\pi jis/L} \quad \text{in}\quad L^2[0,L],\]
where $ w^j=\overline{w^{-j}}$, for all $j\in \mathbb{Z}$.

The next theorem gives an asymptotic behavior for $\delta_j(\beta)$, as $\beta\rightarrow 0$, in terms of the Fourier coefficients of $W$. 

\begin{Theorem}\label{asymptotic-yositomi}
For each $j \in \mathbb N$, 
\[\delta_j(\beta)=\frac{2}{\sqrt{L}}|w^j|\,|\beta|+O(|\beta|^2),\quad \beta\rightarrow 0, \, \beta\in \mathbb{R}. \]
\end{Theorem}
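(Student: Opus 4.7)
The approach will be standard degenerate analytic perturbation theory. First, $\beta\mapsto T_\beta^\pm$ is a self-adjoint holomorphic family of type (A) on the common domains (\ref{condperi}) and (\ref{condantiperi}), since multiplication by $W\in L^2[0,L]$ is infinitesimally $-\partial_s^2$-bounded (cf.\ \cite{kato}, Chapter VII). Consequently, the eigenvalue branches $l_j^\pm(\beta)$ can be organised into real-analytic functions of $\beta$ near $0$, matching the $\beta=0$ eigenvalues with multiplicity.

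At $\beta=0$ one has $\sigma(T_0^+)=\{0\}\cup\{(2\pi k/L)^2:k\in\N\}$, with $0$ simple and each $(2\pi k/L)^2$ doubly degenerate, spanned by the orthonormal pair $\phi_k^\pm(s)=L^{-1/2}e^{\pm 2\pi iks/L}$. Similarly $\sigma(T_0^-)=\{((2k-1)\pi/L)^2:k\in\N\}$, each doubly degenerate with orthonormal basis $\psi_k^\pm(s)=L^{-1/2}e^{\pm i(2k-1)\pi s/L}$. Hence each gap $\delta_j^\pm(\beta)$ is precisely the splitting of a doubly degenerate unperturbed eigenvalue under the perturbation $\beta W$.

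I will then reduce to the $2\times 2$ effective problem on the unperturbed eigenspace using the Kato spectral projection onto the perturbed pair. Since $W$ is real, the matrix of $W$ in the basis $\phi_k^\pm$ is
\[
M_k^+=\frac{1}{L}\begin{pmatrix}\int_0^L W(s)\,ds & \int_0^L W(s)e^{-4\pi iks/L}\,ds\\[4pt]\int_0^L W(s)e^{4\pi iks/L}\,ds & \int_0^L W(s)\,ds\end{pmatrix},
\]
whose off-diagonal entry equals $L^{-1/2}w^{2k}$ by the definition of the Fourier coefficient. The two diagonal entries are equal, so the eigenvalue splitting of $M_k^+$ is exactly $2L^{-1/2}|w^{2k}|$, which yields $\delta_k^+(\beta)=(2/\sqrt{L})|w^{2k}||\beta|+O(\beta^2)$. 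The identical computation in the basis $\psi_k^\pm$ gives $\delta_k^-(\beta)=(2/\sqrt{L})|w^{2k-1}||\beta|+O(\beta^2)$. Combining via the renumbering $\delta_{2j}=\delta_j^+$ and $\delta_{2j-1}=\delta_j^-$ produces the claimed expansion with the $j$-th Fourier coefficient.

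The main obstacle lies in quantifying the $O(\beta^2)$ remainder rather than in the leading algebraic computation above. It requires controlling the second-order Rayleigh--Schr\"odinger contribution, which involves the reduced resolvent $(T_0^\pm-l_k^\pm(0))^{-1}$ acting on the orthogonal complement of the two-dimensional degenerate eigenspace. Its norm is bounded by the reciprocal of the distance from $l_k^\pm(0)$ to the rest of $\sigma(T_0^\pm)$, a quantity of order $k/L^2$; combined with $\|W\|_{L^2[0,L]}$ this produces a remainder of size $O(\beta^2)$ uniform on a neighbourhood of $\beta=0$. A fully detailed implementation can be carried out along the lines of \cite{yoshitomi}, as the authors indicate.
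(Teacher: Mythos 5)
Your outline is correct and matches the approach of \cite{yoshitomi}, to which the paper defers without giving its own proof: first-order degenerate perturbation theory for the doubly degenerate eigenvalues of $T_0^\pm$, reduction to the $2\times 2$ matrix of $W$ on the unperturbed eigenspace, where the equality of the diagonal entries makes the splitting exactly twice the modulus of the off-diagonal entry $L^{-1/2}w^{j}$, with the $O(\beta^2)$ remainder coming from analyticity of the eigenvalue branches. The algebraic bookkeeping (indexing of $\delta_j^\pm$, matching of $w^{2k}$ and $w^{2k-1}$) is right; the only small slip is calling the spectral gap ``of order $k/L^2$'' the bound on the reduced resolvent --- it is the reciprocal of that gap, which is still finite for each fixed $j$ and so does not affect the conclusion.
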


A detailed proof of Theorem \ref{asymptotic-yositomi} can be found in \cite{yoshitomi}.

\vspace{0.3cm}
\noindent
{\bf Auxiliary problem.}
For each $\gamma>0$ and $\theta\in \mathcal{C}$, consider the one-dimensional quadratic form
\[s^{\gamma}_n(\theta)(w):= \int^L_0\left(|w'+i\theta w|^2+\gamma^2 W_n(s)|w|^2\right) \ds,\]
$\dom s^\gamma_n(\theta):=\{w\in H^1(0,L): w(0)=w(L)\}$. The self-adjoint operator associated with $s^{\gamma}_n(\theta)$ is
given by 
\[S_n^\gamma(\theta)w:= (-i\partial_s+\theta)^2w+\gamma^2W_n(s)w,\]
$\dom S_n^\gamma(\theta):=\{w\in H^2(0,L): w(0)=w(L),w'(0)=w'(L)\}$.  Denote by 
$\nu_{n,j}(\gamma,\theta)$ the $j$th eigenvalue of $S^\gamma_n(\theta)$ counted with multiplicity.

Now, consider
\[b^{\gamma}_n(\theta)(w):= b^{\varepsilon, \gamma}_n(\theta)(wu_n) = 
\int^L_0\left(|w'+i\theta w|^2+ V_n^\gamma(s)|w|^2\right) \ds,\]
$\dom b^\gamma_n(\theta):=\{w\in H^1(0,L): w(0)=w(L)\}$, where
$V_n^\gamma(s):=\gamma^2 W_n(s) -\gamma C^2_n(S)\alpha''(s)$.
The self-adjoint operator associated with $b^{\gamma}_n(\theta)$ is 
\[T_n^\gamma(\theta)w:= (-i\partial_s+\theta)^2w+V^\gamma_n(s)w,\]
$\dom T_n^\gamma(\theta):=\{w\in H^2(0,L): w(0)=w(L),w'(0)=w'(L)\}$.  
Denote by $k_{n,j}(\gamma,\theta)$ the $j$th eigenvalue of $T^\gamma_n(\theta)$ counted with multiplicity.  
  
Take $c>\max\{\|V_n\|_{\infty}, \|W_n\|_{\infty}\}$.
Some straightforward calculations show that there exists $K>0$, so that, 
\begin{equation}\label{inequadformusefulappox}
|\left(b_n^\gamma(\theta)+c\right)(w)-\left(s^\gamma_n(\theta)+c\right)(w)|\leq 
K \, \gamma \, |\left(b_n^\gamma(\theta)+c\right)(w)|,\quad \forall w\in \dom b^\gamma_n(\theta),
\end{equation}
for all $\theta\in \mathcal{C}$, for all $\gamma>0$ small enough.

Inequality (\ref{inequadformusefulappox}), Theorem 2 in \cite{bdv},  and Corollary 2.3 in \cite{gohberg} imply

\begin{Corollary}\label{corollary-gamma-limit}
For each $j\in \mathbb{N}$, there exists $\gamma_j>0$, so that, for all $\gamma\in(0,\gamma_j)$, 
\[k_{n,j}(\gamma,\theta)=\nu_{n,j}(\gamma,\theta)+O(\gamma),\]
uniformly in $\mathcal{C}$. 
\end{Corollary}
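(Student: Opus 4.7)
\textbf{Proof proposal for Corollary \ref{corollary-gamma-limit}.} The plan is to turn the relative form inequality (\ref{inequadformusefulappox}) into norm‑resolvent closeness of $T_n^\gamma(\theta)+c$ and $S_n^\gamma(\theta)+c$ (with an $O(\gamma)$ rate, uniform in $\theta\in\mathcal C$), and then extract the eigenvalue asymptotics via the quoted Corollary~2.3 in \cite{gohberg}. The cited Theorem~2 in \cite{bdv} is exactly of this shape: it converts a relative form bound of the type
\[
\bigl|(b_n^\gamma(\theta)+c)(w)-(s_n^\gamma(\theta)+c)(w)\bigr|\le K\gamma\,(b_n^\gamma(\theta)+c)(w),\qquad w\in\dom b_n^\gamma(\theta),
\]
into an operator inequality $(1-K\gamma)(S_n^\gamma(\theta)+c)\le T_n^\gamma(\theta)+c\le (1+K\gamma)(S_n^\gamma(\theta)+c)$ on the common form domain, and gives a resolvent estimate of order $\gamma$.

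The first step is to verify that everything in (\ref{inequadformusefulappox}) is indeed uniform in $\theta\in\mathcal C$: the form domain $\{w\in H^1(0,L):w(0)=w(L)\}$ is independent of $\theta$, the constant $c$ is chosen once and for all (depending only on $\|V_n\|_\infty$ and $\|W_n\|_\infty$), and the constant $K$ comes from estimating $|\gamma C_n^2(S)\alpha''(s)|$ against $\gamma^2 W_n(s)+c$, which is $\theta$‑independent. This gives the first key output: for all $\theta\in\mathcal C$ and all $\gamma>0$ sufficiently small,
\[
\bigl\|(T_n^\gamma(\theta)+c)^{-1}-(S_n^\gamma(\theta)+c)^{-1}\bigr\|\le C\gamma,
\]
where $C$ does not depend on $\theta$.

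The second step is to apply Corollary~2.3 in \cite{gohberg}, which controls the $j$‑th eigenvalue of a self‑adjoint operator in terms of the norm of the resolvent difference, provided the eigenvalues stay in a compact set. Here $\nu_{n,j}(\gamma,\theta)$ is bounded from above and below uniformly in $\theta\in\mathcal C$ and for $\gamma$ in a small neighborhood of $0$ (by min–max and boundedness of $W_n$), hence so is $k_{n,j}(\gamma,\theta)$ by the relative form bound. This converts the $O(\gamma)$ resolvent estimate into
\[
\bigl|k_{n,j}(\gamma,\theta)-\nu_{n,j}(\gamma,\theta)\bigr|\le C_j\gamma,\qquad \theta\in\mathcal C,
\]
for $\gamma\in(0,\gamma_j)$, which is the desired conclusion.

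The main technical point I expect to need care with is the uniformity in $\theta$: one has to check that the bound on $\|(S_n^\gamma(\theta)+c)^{-1}\|$ and the bound on the $j$‑th eigenvalue used in the \cite{gohberg} corollary are both uniform over the compact set $\overline{\mathcal C}$. For the resolvent norm this is automatic since $S_n^\gamma(\theta)+c\ge c-\|W_n\|_\infty\gamma^2>0$ independently of $\theta$; for the $j$‑th eigenvalue it follows from analyticity of $\nu_{n,j}(\gamma,\cdot)$ in $\theta$ (Lemma~\ref{analytic} applied to $S_n^\gamma(\theta)$) together with continuity in $\gamma$ near $0$. Beyond this, the proof is essentially an assembly of the two cited results applied to (\ref{inequadformusefulappox}); no extra estimates specific to the waveguide are needed.
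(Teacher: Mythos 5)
Your proposal follows exactly the paper's argument: the authors prove the corollary in a single sentence by citing inequality (\ref{inequadformusefulappox}), Theorem~2 in \cite{bdv}, and Corollary~2.3 in \cite{gohberg}, which is precisely the three-step chain you spell out. Your write-up simply unpacks those citations (and correctly flags the uniformity-in-$\theta$ issue that the paper leaves implicit), so there is no substantive difference in approach.
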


\vspace{0.3cm}
\noindent
{\bf Some estimates.} {\bf I.}
We define
\begin{displaymath}
G_{n,j}(\gamma):=\left\{ \begin{array}{ll}
(k_{n,j}(\gamma,\pi/L),k_{n,j+1}(\gamma,\pi/L) ), & \text{for}\,\, j\,\, \text{odd so that } 
k_{n,j}(\gamma,\pi/L)\neq k_{n,j+1}(\gamma,\pi/L),\\
(k_{n,j}(\gamma,0),k_{n,j+1}(\gamma,0) ), & \text{for}\,\, j\,\, \text{even so that } k_{n,j}(\gamma,0)\neq k_{n,j+1}(\gamma,0) ,\\ 
\emptyset, & \text{otherwise}.
\end{array}\right..
\end{displaymath}	
Namely, if $G_{n,j}(\gamma) \neq \emptyset$,  it is called of gap of the spectrum $\sigma(T_n^\gamma)$, where
\[T_n^\gamma w := -w''+V_n^\gamma(s)w, \quad \dom T_n^\gamma = H^2(\mathbb R).\]

Similarly to the considerations of Section \ref{subsection-gaps} and Corollary \ref{corollary-gaps}, we have
\begin{Corollary}\label{corollary-gaps-location}
If $G_{n,j}(\gamma)\neq \emptyset$, there exist $\gamma_j > 0$ and $\varepsilon_j>0$, so that, for all 
$\gamma \in (0,\gamma_j)$ and for all $\varepsilon\in (0,\varepsilon_j)$, 
\[\min\limits_{\theta\in\mathcal{C}}E_{n,j+1}(\gamma,\varepsilon,\theta)-\max\limits_{\theta\in\mathcal{C}}
E_{n,j}(\gamma,\varepsilon,\theta)\geq\frac{1}{2}|G_{n,j}(\gamma)|.\]	
\end{Corollary}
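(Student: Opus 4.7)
The plan is to mirror almost verbatim the reasoning used for Corollary \ref{corollary-gaps}, letting the scaling parameter $\gamma$ play a passive role at the $\varepsilon$-level. First, I would observe that for each fixed (small) $\gamma > 0$, the mapping $\alpha \mapsto \gamma\alpha$ preserves all the hypotheses used in Sections \ref{geodom}--\ref{prooftheosub}: $\gamma\alpha$ is $C^2$, with $(\gamma\alpha)', (\gamma\alpha)'' \in L^\infty(I)$ and $(\gamma\alpha)(0) = 0$. Hence the entire framework built around $-\Delta_n^{\varepsilon}$ applies to $-\Delta_n^{\varepsilon,\gamma}$, and the fiber operators $-\Delta_n^{\varepsilon,\gamma}(\theta)$ are exactly the Floquet fibers of the twisted waveguide with $\alpha$ replaced by $\gamma\alpha$. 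In particular, Theorem \ref{converges-norm-reslvents-fibers} and Corollary \ref{corbasynw} apply to this setting and give
\[\lim_{\varepsilon \to 0}\left(E_{n,j}(\gamma,\varepsilon,\theta) - \frac{\lambda_n}{\varepsilon^2}\right) = k_{n,j}(\gamma,\theta),\]
uniformly in $\theta \in \overline{\mathcal{C}}$, where $k_{n,j}(\gamma,\theta)$ is the $j$th eigenvalue of the auxiliary one-dimensional operator $T_n^\gamma(\theta)$ introduced just above.

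Second, I would invoke the periodic/antiperiodic structure exactly as in Subsection \ref{subsection-gaps}: properties (a) and (b) from Chapter XIII.16 of Reed--Simon (whose proof is insensitive to replacing $V_n$ by $V_n^\gamma$) imply that the extrema $\max_\theta k_{n,j}(\gamma,\theta)$ and $\min_\theta k_{n,j+1}(\gamma,\theta)$ are attained at $\theta \in \{0, \pi/L\}$, and by the very definition of $G_{n,j}(\gamma)$, whenever $G_{n,j}(\gamma) \neq \emptyset$ one has
\[\min_{\theta \in \mathcal{C}} k_{n,j+1}(\gamma,\theta) - \max_{\theta \in \mathcal{C}} k_{n,j}(\gamma,\theta) = |G_{n,j}(\gamma)| > 0.\]

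Third, combining these two ingredients, I would choose $\varepsilon_j > 0$ so small that for every $\varepsilon \in (0,\varepsilon_j)$ and every $\theta \in \mathcal{C}$,
\[\Bigl| E_{n,j}(\gamma,\varepsilon,\theta) - \tfrac{\lambda_n}{\varepsilon^2} - k_{n,j}(\gamma,\theta) \Bigr| < \tfrac{1}{4}|G_{n,j}(\gamma)|,\]
together with the analogous estimate for index $j+1$. Subtracting the corresponding bounds yields
\[\min_{\theta}E_{n,j+1}(\gamma,\varepsilon,\theta) - \max_{\theta}E_{n,j}(\gamma,\varepsilon,\theta) \geq |G_{n,j}(\gamma)| - \tfrac{1}{2}|G_{n,j}(\gamma)| = \tfrac{1}{2}|G_{n,j}(\gamma)|,\]
which is the claimed inequality. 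The role of $\gamma_j$ is to confine the argument to a regime where the set-up makes sense (the diffeomorphism $F_\varepsilon$ of Section \ref{geodom}, and the asymptotics in Corollary \ref{corollary-gamma-limit} which one uses in conjunction with Theorem \ref{asymptotic-yositomi} to exhibit non-empty gaps).

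The only genuine subtlety — and the point I would double-check — is the joint behavior in $(\gamma,\varepsilon)$: the uniform convergence in Corollary \ref{corbasynw} is stated at fixed $\gamma$, so one must either take $\varepsilon_j = \varepsilon_j(\gamma)$ or verify (which is routine, since the quadratic forms $\hat b_n^{\varepsilon,\gamma}(\theta)$ depend continuously on $\gamma$) that Dini-type monotonicity in $\varepsilon$ delivers uniform control on a compact $\gamma$-interval. Beyond this bookkeeping, the proof is a direct transcription of Corollary \ref{corollary-gaps} with $V_n$ replaced by $V_n^\gamma$.
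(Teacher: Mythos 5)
Your proof is correct and follows essentially the same route as the paper, which simply says ``similarly to the considerations of Section~\ref{subsection-gaps} and Corollary~\ref{corollary-gaps}'' without writing out details. The one minor variation: the paper exploits the monotonicity of $E_{n,j}(\gamma,\varepsilon,\theta)-\lambda_n/\varepsilon^2$ in $\varepsilon$ (from the proof of Corollary~\ref{corbasynw}) to obtain the upper bound on $\max_\theta E_{n,j}$ for \emph{all} $\varepsilon>0$ and uses the uniform convergence only for the lower bound on $\min_\theta E_{n,j+1}$, whereas you use a symmetric two-sided $\tfrac14|G_{n,j}(\gamma)|$ error budget; both give the same conclusion. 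Your caveat about the joint $(\gamma,\varepsilon)$ dependence is well taken: the uniform-in-$\theta$ convergence of Corollary~\ref{corbasynw} is obtained at fixed potential (hence fixed $\gamma$), so as stated the threshold $\varepsilon_j$ does depend on $\gamma$, and the paper's phrasing ``there exist $\gamma_j>0$ and $\varepsilon_j>0$'' should be read with $\varepsilon_j=\varepsilon_j(\gamma)$ (or, if one insists on a $\gamma$-independent $\varepsilon_j$, one needs the continuity/monotonicity in $\gamma$ argument you sketch). The role of $\gamma_j$ itself is, as you note, to guarantee applicability of Corollary~\ref{corollary-gamma-limit} together with Theorem~\ref{asymptotic-yositomi} in the surrounding material; it is not logically required once $G_{n,j}(\gamma)\neq\emptyset$ is assumed for a fixed $\gamma$.
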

	
\vspace{0.3cm}
\noindent
{\bf II.} Now, we consider
\begin{displaymath}
\tilde{G}_{n,j}(\gamma):=\left\{ \begin{array}{ll}
( \nu_{n,j}(\gamma,\pi/L),\nu_{n,j+1}(\gamma,\pi/L) ), & \text{for}\,\, j\,\, \text{odd so that } 
\nu_{n,j}(\gamma,\pi/L)\neq \nu_{n,j+1}(\gamma,\pi/L),\\
(\nu_{n,j}(\gamma,0),\nu_{n,j+1}(\gamma,0) ), & \text{for}\,\, j\,\, \text{even so that } \nu_{n,j}(\gamma,0)\neq \nu_{n,j+1}(\gamma,0) ,\\ 
\emptyset, & \text{otherwise}.
\end{array}\right.;
\end{displaymath}	
if $\tilde{G}_{n,j}(\gamma) \neq \emptyset$, it is called of gap of $\sigma(S_n^\gamma)$, where
\[S_n^\gamma w := -w''+ \gamma^2 W_n(s)w, \quad \dom S_n^\gamma = H^2(\mathbb R).\]

As in the proof of Theorem \ref{gap-exists},
consider the unitary operator $(u_\theta w)(s)=e^{-i\theta s} w(s)$.
We define the self-adjoint operators $\tilde{S}_n^\gamma (0) := u_0 S_n^\gamma(0) u_0^{-1}$ and 
$\tilde{S}_n^\gamma(\pi/L) := u_{\pi/L} S_n^\gamma (\pi/L) u_{\pi/L}^{-1}$, whose
eigenvalues are given by $\{\nu_{n,j}(\gamma, 0)\}_{j \in \mathbb N}$ and $\{\nu_{n,j}(\gamma, \pi/L)\}_{j \in \mathbb N}$,
respectively.
Furthermore,
the domains of these operators are given by (\ref{condperi}) and (\ref{condantiperi}), respectively.
Thus,
we can see that $|\tilde{G}_{n,j}(\gamma)|=\delta_j(\gamma^2)$, for all $j \in \mathbb{N}$, if we consider $\beta
=\gamma^2$ and $W(s)=W_n(s)$ in Theorem \ref{asymptotic-yositomi}.

\vspace{0.3cm}
With the notes of the previous paragraphs, we have conditions to prove the main theorem of this section.

\vspace{0.3cm}
\noindent
{\bf Proof of Theorem \ref{location}:}
Recall that we have denoted by $\{w^j_n\}_{j=-\infty}^{+\infty}$ the Fourier coefficients of $W_n$. 
Since $W_n$ isn't a constant function in $[0,L]$, there exists $j\in\mathbb{N}$, so that, $w^{j}_n\neq 0$.   
By Theorem \ref{asymptotic-yositomi}, 
\[|\tilde{G}_{n,j}(\gamma)|= \frac{2}{\sqrt{L}} \gamma^2|w^{j}_n|+O(\gamma^4), \quad\gamma\rightarrow 0.\]
This estimate and Corollary \ref{corollary-gamma-limit} imply that $|G_{n,j}(\gamma)|>0$, for all
$\gamma>0$ small enough. By Corollary \ref{corollary-gaps-location}, 
theorem is proven by taking  $C_{n,j}(\gamma):=|G_{n,j}(\gamma)|/2>0$.

\begin{Remark}{\rm
Since we suppose that $V_n(s)$ is a non null function in $[0,L]$,
if $W_n(s) = 0$, for all $s \in \mathbb R$, one can consider $\tilde{W}_n(s) := C_n^2(S) \alpha''(s)$ instead of $W_n(s)$
in this subsection.
All the previous results also hold true in this case; the proofs are similar and will not be presented here.}
\end{Remark}

\appendix

\section{Appendix}

\subsection{The self-adjoint operator associated with $b_n$}\label{app001}

Recall the quadratic form 
\[b_n(w) =\int_Q|w'u_n+\langle\nabla_yu_n,Ry\rangle \alpha'(s)w|^2 \ds,\]
$\dom b_n =H^1(I)$.
The goal is to show that the operator $T_n$ defined by
(\ref{effectivepotential02}), (\ref{effectivepotential}), (\ref{effectivepotential01}) and (\ref{domintrorobin})
 in the Introduction is the self-adjoint operator associated with $b_n$.

Consider the particular case  where $I=(a,b)$ is a bounded interval.
Some calculations show that 
\[b_n(w) =\int_a^b\left(|w'|^2+V_n(s)|w|^2\right) \ds + C_n^2(S) \alpha'(b) |w(b)|^2 - C_n^2(S) \alpha'(a) |w(a)|^2.\] 
Let $b_n(w,u)$ be the sesquilinear form associated with $b_n(w)$.
We have
\[b_n(w,u)=\langle w,T_nu\rangle,\quad \forall w\in \dom b_n, \forall v\in \dom T_n.\]
Then,
$T_n$ is self-adjoint operator associated with $b_n$.
The case $I=\mathbb R$ can be proven in a similar way.

\subsection{$\Gamma$-convergence}\label{app1}

Let $H$ be a (real or complex) Hilbert space and $\overline{\mathbb R}= \mathbb R \cup \{ +\infty\}$.
The sequence of quadratic functionals $f_\varepsilon: H \rightarrow \overline{\mathbb{R}}$ 
strongly $\Gamma$-converges to  $f:H \rightarrow \overline{\mathbb{R}}$ 
(that is, $f_\varepsilon\xrightarrow{S\Gamma} f$) iff the following two conditions are satisfied:

\vspace{0.3cm}
\noindent
$i)$
 For every $v\in H$ 
and every $v_\varepsilon \rightarrow v$ in $H$ one has 
\[\liminf\limits_{\varepsilon\rightarrow0}f_\varepsilon(v_\varepsilon) \geq f(v).\]

\noindent
$ii)$
For every $v\in H$, there exists a sequence $v_\varepsilon\rightarrow v$ in $H$ such that 
\[\lim\limits_{\varepsilon\rightarrow0}f_\varepsilon(v_\varepsilon)=f(v).\]

If the strong convergence $v_\varepsilon\rightarrow v$ is replaced by the weak convergence 
$v_\varepsilon\rightharpoonup v$ in  $i)$ and $ii)$, then one has a characterization of 
the weakly $\Gamma$-converge (i.e., $f_\varepsilon\xrightarrow{W\Gamma}f$).

The following result can be found in \cite{maso} where
is proven the version for real Hilbert spaces;
the generalization for complex Hilbert spaces is presented in \cite{cesargamma}.

\begin{Proposition}\label{propappfirst}
Let $d_\varepsilon; d$ be positive (or uniformly lower bounded) closed sesquilinear
forms in the Hilbert space $H$, and $D_\varepsilon; D$ the corresponding associated
positive self-adjoint operators. Then, the following statements are equivalent:

a) $d_\varepsilon \xrightarrow{S\Gamma}d$ and, for each $\zeta \in H$, $\displaystyle d(\zeta) \leq \liminf_{\varepsilon \to 0} 
d_\varepsilon(\zeta_\varepsilon)$, $\forall \zeta_\varepsilon \to \zeta$ in $H$.

b) $d_\varepsilon \xrightarrow{S\Gamma}d$ and $d_\varepsilon\xrightarrow{W\Gamma}d$.

c) $D_\varepsilon$ converges to $D$ in the strong resolvent sense in 
$H_0 = \overline{\dom D}\subset H$, that is,
\[\lim_{\varepsilon \to 0} R_{-\lambda}(D_\varepsilon) \zeta = R_{-\lambda}(D) P \zeta,
\quad \quad \forall \zeta \in H, \forall \lambda> 0,\]
where $P$ is the orthogonal projection onto $H_0$.
\end{Proposition}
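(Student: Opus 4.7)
The plan is to chain the equivalences $(a)\Leftrightarrow(b)\Leftrightarrow(c)$ using the standard Mosco-type theory of variational convergence of closed quadratic forms. The equivalence $(a)\Leftrightarrow(b)$ is essentially bookkeeping: the $\liminf$ inequality in (a), read for weakly convergent sequences (as is natural in this framework, and as is the only reading that makes the condition non-redundant with $d_\varepsilon\xrightarrow{S\Gamma}d$), is precisely one half of the definition of weak $\Gamma$-convergence, while the other half --- the weak recovery sequence --- is automatic because the strong recovery sequence provided by $d_\varepsilon\xrightarrow{S\Gamma}d$ also converges weakly. The reverse direction $(b)\Rightarrow(a)$ is immediate from the definition of $W\Gamma$-convergence.

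The substantive part is $(b)\Leftrightarrow(c)$. For $(b)\Rightarrow(c)$, fix $\zeta\in H$ and $\lambda>0$ and consider the functional
\[
F_\varepsilon(v):=d_\varepsilon(v)+\lambda\|v\|^2-2\,\mathrm{Re}\,\langle\zeta,v\rangle,
\]
whose unique minimizer on $\overline{\dom d_\varepsilon}$ is $u_\varepsilon:=R_{-\lambda}(D_\varepsilon)\zeta$. The uniform lower bound on $d_\varepsilon$ makes $(F_\varepsilon)_\varepsilon$ equi-coercive, so $(u_\varepsilon)_\varepsilon$ is bounded and admits a weak limit $u_\varepsilon\rightharpoonup u$ along a subsequence. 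The weak $\Gamma$-$\liminf$ yields $F(u)\leq\liminf_\varepsilon F_\varepsilon(u_\varepsilon)$, where $F$ is built from $d$ in the same way; testing against the strong recovery sequence at the minimizer of $F$ furnishes the matching upper bound, so $u$ is the minimizer of $F$, i.e.\ $u=R_{-\lambda}(D)P\zeta$. A convergence-of-energies argument then upgrades the weak convergence of $u_\varepsilon$ to strong convergence, identifying the full-sequence resolvent limit. The reverse implication $(c)\Rightarrow(b)$ follows by representing the quadratic forms through the spectral calculus of $D_\varepsilon$ and $D$ and using functional calculus with the resolvent: strong resolvent convergence of uniformly lower-bounded self-adjoint operators yields both the strong and weak $\Gamma$-convergence of the associated quadratic forms, which is the reverse half of the classical Mosco theorem.

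Finally, since \cite{maso} states the result only for real Hilbert spaces, the complex case is reduced to it by viewing $H$ as a real Hilbert space equipped with the real inner product $\mathrm{Re}\,\langle\cdot,\cdot\rangle$: closed positive sesquilinear forms become real-valued closed quadratic forms, and the associated positive self-adjoint operators coincide, so the real-case proof transfers verbatim. The main obstacle I anticipate is precisely the equi-coercivity and convergence-of-energies step inside $(b)\Rightarrow(c)$: one must use the uniform lower bound $d_\varepsilon\geq -M\|\cdot\|^2$ to absorb the linear term $-2\,\mathrm{Re}\,\langle\zeta,v\rangle$ into a coercive functional on $\overline{\dom d_\varepsilon}$, extract weak limits in a way consistent with the restriction to $H_0=\overline{\dom D}$, and then pass from weak to strong convergence of the minimizers $u_\varepsilon$ by showing that the energies $F_\varepsilon(u_\varepsilon)$ actually converge to $F(u)$ (and not merely $\liminf F_\varepsilon(u_\varepsilon)\geq F(u)$).
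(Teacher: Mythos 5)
The paper does not actually prove Proposition~\ref{propappfirst}: it explicitly defers to \cite{maso} for the real Hilbert space case and to \cite{cesargamma} for the complex generalization, so there is no in-paper proof to compare against. Your reconstruction is nonetheless the standard one in that literature --- the characterization of strong resolvent convergence via Mosco convergence, argued by identifying the resolvent $R_{-\lambda}(D_\varepsilon)\zeta$ as the minimizer of $F_\varepsilon(v)=d_\varepsilon(v)+\lambda\|v\|^2-2\,\mathrm{Re}\,\langle\zeta,v\rangle$ and passing to the limit by the direct method. Two remarks. First, you are right that condition~(a) only makes sense (and is only non-redundant with $d_\varepsilon\xrightarrow{S\Gamma}d$) if $\zeta_\varepsilon\to\zeta$ is read as weak convergence; this matches the statement in \cite{maso} and is almost certainly a typo in the paper. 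Second, the two places where you flag difficulty are indeed where the technical work lives, and you leave them at the level of a plan rather than an argument: the upgrade from weak to strong convergence of $u_\varepsilon$ should invoke the uniform convexity of $F_\varepsilon$ (compare $F_\varepsilon\bigl(\tfrac{u_\varepsilon+w_\varepsilon}{2}\bigr)$ with $\tfrac12 F_\varepsilon(u_\varepsilon)+\tfrac12 F_\varepsilon(w_\varepsilon)$ along the strong recovery sequence $w_\varepsilon\to u$ to squeeze $\|u_\varepsilon-w_\varepsilon\|$), and the real-to-complex reduction, while correct in spirit (view $H$ over $\R$ with inner product $\mathrm{Re}\,\langle\cdot,\cdot\rangle$, note the associated self-adjoint operator and the weak topology are unchanged), is exactly the point addressed in \cite{cesargamma} and deserves to be checked rather than asserted ``verbatim.'' None of this amounts to a wrong step; the outline is sound and faithful to the sources the paper cites.
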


The following result is due to \cite{cesargamma}.

\begin{Proposition}\label{propappsecond}
Let $d_\varepsilon, d \geq \beta > -\infty$ closed sesquilinear forms and
$D_\varepsilon, D \geq \beta \Id$ the corresponding associated self-adjoint operators, and let 
$\overline{\dom D} = H_0 \subset H$. Assume that the following three conditions hold:

a) $d_\varepsilon \xrightarrow{S\Gamma}d$ and $d_\varepsilon\xrightarrow{W\Gamma}d$.

b) The resolvent operator $R_{-\lambda}(D)$ is compact in $H_0$ for some real number
$\lambda > |\beta|$.

c) There exists a Hilbert space ${\cal K}$, compactly embedded in H, so that if
the sequence $(\psi_\varepsilon)$ is bounded in $H$ and $(d_\varepsilon(\psi_\varepsilon))$ is also bounded, then
$(\psi_\varepsilon)$ is a bounded subset of ${\cal K}$.

Then, $D_\varepsilon$ converges in norm resolvent sense to $D$ in $H_0$ as $\varepsilon \to 0$.
\end{Proposition}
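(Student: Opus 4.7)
The plan is a contradiction argument that combines the strong resolvent convergence provided by Proposition \ref{propappfirst} with a ``collective compactness'' property of the resolvent family, inherited from hypothesis (c). Fix $\lambda>|\beta|$, so that $D_\varepsilon+\lambda\geq (\lambda+\beta)\Id>0$ and every $T_\varepsilon:=R_{-\lambda}(D_\varepsilon)$ is self-adjoint with $\|T_\varepsilon\|\leq (\lambda+\beta)^{-1}$. Since hypothesis (a) is precisely condition (b) of Proposition \ref{propappfirst}, that proposition gives $T_\varepsilon\zeta\to T\zeta:=R_{-\lambda}(D)P\zeta$ strongly for every $\zeta\in H$; moreover, by hypothesis (b) of the present proposition, the limit $T$ is compact on $H$, since it is the composition of the bounded projection $P$ with the compact operator $R_{-\lambda}(D)$ on $H_0$.

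The first substantive step is to use (c) to upgrade this pointwise convergence to a uniform compactness property: for every bounded family $(\zeta_\varepsilon)$ in $H$, the family $(T_\varepsilon\zeta_\varepsilon)$ is precompact in $H$. Indeed, setting $\psi_\varepsilon:=T_\varepsilon\zeta_\varepsilon$, the identity $d_\varepsilon(\psi_\varepsilon)+\lambda\|\psi_\varepsilon\|^2=\langle \psi_\varepsilon,\zeta_\varepsilon\rangle$ combined with $\|\psi_\varepsilon\|\leq (\lambda+\beta)^{-1}\|\zeta_\varepsilon\|$ bounds $d_\varepsilon(\psi_\varepsilon)$ uniformly in $\varepsilon$. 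Condition (c) then places $(\psi_\varepsilon)$ in a bounded subset of $\mathcal{K}$, and the compact embedding $\mathcal{K}\hookrightarrow H$ furnishes the required precompactness.

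Suppose now, for contradiction, that $T_\varepsilon\not\to T$ in operator norm. One may extract $\delta>0$, a subsequence $\varepsilon_k\to 0$, and unit vectors $\zeta_k\in H$ with $\|T_{\varepsilon_k}\zeta_k-T\zeta_k\|\geq \delta$. Passing to subsequences, $\zeta_k\rightharpoonup \zeta$ weakly in $H$; compactness of $T$ forces $T\zeta_k\to T\zeta$ strongly, while the collective compactness of the previous paragraph extracts a further subsequence with $T_{\varepsilon_k}\zeta_k\to \psi$ strongly in $H$. It suffices to show $\psi=T\zeta$ to reach the contradiction $\delta\leq 0$.

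For this identification, decompose $T_{\varepsilon_k}\zeta_k=T_{\varepsilon_k}\zeta+T_{\varepsilon_k}(\zeta_k-\zeta)$. The first summand converges strongly to $T\zeta$ by pointwise strong resolvent convergence, so $\xi_k:=T_{\varepsilon_k}(\zeta_k-\zeta)$ converges strongly to $\psi-T\zeta$. On the other hand, self-adjointness of $T_{\varepsilon_k}$ gives, for every $v\in H$,
\[
\langle \xi_k,v\rangle=\langle \zeta_k-\zeta,T_{\varepsilon_k}v\rangle=\langle \zeta_k-\zeta,Tv\rangle+\langle \zeta_k-\zeta,(T_{\varepsilon_k}-T)v\rangle,
\]
whose first term vanishes by the weak convergence $\zeta_k-\zeta\rightharpoonup 0$ and whose second vanishes by the strong convergence $T_{\varepsilon_k}v\to Tv$ together with the uniform bound on $\|\zeta_k-\zeta\|$. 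Hence $\xi_k\rightharpoonup 0$, forcing $\psi=T\zeta$. The hard part is precisely this last identification: the $\Gamma$-convergence hypothesis (a) alone controls the resolvents only on strongly convergent data, and what makes the argument work on weakly convergent inputs is the duality swap allowed by the self-adjointness of each $T_\varepsilon$.
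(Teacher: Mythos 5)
Your argument is correct. Note that the paper itself gives no proof of Proposition \ref{propappsecond}; it is quoted from \cite{cesargamma}, so your write-up in effect supplies the omitted argument, and its structure is the standard (and essentially the cited) route: (i) hypothesis (a) plus Proposition \ref{propappfirst} gives the pointwise convergence $T_\varepsilon\zeta\to T\zeta$ with $T=R_{-\lambda}(D)P$ compact by (b); (ii) the energy identity $d_\varepsilon(T_\varepsilon\zeta_\varepsilon)+\lambda\|T_\varepsilon\zeta_\varepsilon\|^2=\langle T_\varepsilon\zeta_\varepsilon,\zeta_\varepsilon\rangle$ feeds hypothesis (c) and yields collective precompactness of $(T_\varepsilon\zeta_\varepsilon)$ for bounded data; (iii) the subsequence/contradiction step identifies the limit, with self-adjointness of $T_{\varepsilon_k}$ doing exactly the work you highlight of transferring the weak convergence $\zeta_k\rightharpoonup\zeta$ through the resolvents. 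Two small points to tidy. First, since the operator norm is a supremum, either take $\|T_{\varepsilon_k}-T\|>\delta$ and pick unit vectors $\zeta_k$ with $\|T_{\varepsilon_k}\zeta_k-T\zeta_k\|>\delta$, or settle for $\delta/2$; the sup need not be attained. Second, the Remark following the proposition allows $\dom d_\varepsilon$ (and $\dom d$) to be non-dense in $H$; in that case read $T_\varepsilon$ as $R_{-\lambda}(D_\varepsilon)P_\varepsilon$, with $P_\varepsilon$ the orthogonal projection onto $\overline{\dom D_\varepsilon}$ — this extension is still self-adjoint on $H$, satisfies $\|T_\varepsilon\|\leq(\lambda+\beta)^{-1}$, and obeys the same energy identity, so every step of your proof goes through verbatim. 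Finally, proving norm convergence at the single point $-\lambda$ with $\lambda>|\beta|$ is enough, since it propagates to the rest of the common resolvent set by the resolvent identity.
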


\begin{Remark}{\rm
In both propositions, the
domain of $D$ is not supposed to be dense in $H$ but is required that
$\img D \subset H_0$; 
we say that $D$ is self-adjoint in $H_0$.}
\end{Remark}


%
%

\end{document}